\documentclass[conference]{IEEEtran}
\IEEEoverridecommandlockouts

\usepackage{cite}
\usepackage{amsmath,amssymb,amsfonts,amsthm}
\usepackage{graphicx}
\usepackage{textcomp}
\usepackage{xcolor}
\usepackage{booktabs}
\usepackage{multirow}
\usepackage{bm}
\usepackage{mathtools}
\usepackage{url}
\usepackage{hyperref}

\usepackage[letterpaper,top=0.78in,bottom=1.07in,left=0.64in,right=0.64in]{geometry}

\newtheorem{theorem}{Theorem}
\newtheorem{lemma}{Lemma}
\newtheorem{corollary}{Corollary}
\newtheorem{definition}{Definition}
\newtheorem{proposition}{Proposition}
\newtheorem{remark}{Remark}
\newtheorem{example}{Example}

\newcommand{\EX}{\mathbb{E}}
\newcommand{\JG}{\mathrm{JG}}

\newcommand{\bpi}{\boldsymbol{\pi}}
\newcommand{\R}{\mathbb{R}}
\newcommand{\TV}{\mathrm{TV}}
\newcommand{\D}{\mathrm{D}}
\DeclareMathOperator{\Var}{Var}

\begin{document}

\title{Chi-Squared Geometry for Robust Finite-Blocklength
       Information and Dispersion Analysis $^{\star}$}
\author{
  \IEEEauthorblockN{Hassan Tavakoli}
  \IEEEauthorblockA{School of EECS\\
                    Oregon State University\\
                    Oregon, OR 97331, USA\\
                    tavakolh@oregonstate.edu}
  \and
  \IEEEauthorblockN{Thinh Nguyen, \emph{Senior Member, IEEE}}
  \IEEEauthorblockA{School of EECS\\
                    Oregon State University\\
                    Oregon, OR 97331, USA\\
                    thinhq@eecs.oregonstate.edu}
  \and
  \IEEEauthorblockN{Bella Bose, \emph{Life Fellow, IEEE}\thanks{ $^{\star}$ This work was supported by the National Science Foundation under Grant No. CCF:SHF:2417898.}}
  \IEEEauthorblockA{School of EECS\\
                    Oregon State University\\
                    Oregon, OR 97331, USA\\
                    Bella.Bose@oregonstate.edu}
}

\maketitle

\begin{abstract}
We develop a column-wise chi-squared geometry for discrete
memoryless channels (DMCs) yielding tight, logarithm-free
bounds on mutual information, channel dispersion, and
finite-blocklength coding rates without evaluating logarithms
of the channel matrix. The key parameter is~\(\eta\)---the
worst-case relative deviation of a transition probability
from its output marginal, which is small precisely when the
channel is close to the fully noisy channel $t_{ij}=s_j$.
We prove three main results: (1) a third-order ratio expansion
showing \(I(X;Y)/\chi^2(X;Y)\to 1/2\) as \(\eta\to 0\) with an
\(O(\eta)\) skewness correction; (2) a two-sided dispersion
equivalence bounding \(V(X;Y)\) above and below by
\(\chi^2(X;Y)\) with explicit constants \(c_{\pm}(\eta)\to 1\);
and (3) a certified robust design rate
\(R_{\mathrm{cert}}(n,\varepsilon)\) with total certification
gap \(O(\eta)+O(\eta/\sqrt{n})+O(\log n/n)\).
The certified bounds on \(I\) and \(V\) require only addition,
multiplication, division, and square roots; the final rate
also uses \(Q^{-1}(\varepsilon)\).
\end{abstract}
\begin{IEEEkeywords}
finite blocklength coding, discrete memoryless channels, mutual information, channel dispersion, chi-squared.
\end{IEEEkeywords}

\section{Introduction}
\label{sec:intro}

The second-order characterisation of channel coding, due to
Strassen~\cite{Strassen1962} and sharpened by Polyanskiy,
Poor, and Verd\'u~\cite{Polyanskiy2010}, places the maximum
achievable rate at blocklength~$n$ and error
probability~$\varepsilon$ as
\begin{equation}
  \log M^*(n,\varepsilon)
  = nI(X;Y) - \sqrt{nV(X;Y)}\,Q^{-1}(\varepsilon)
    + O(\log n),
  \label{eq:ppv}
\end{equation}
where $V(X;Y)$ is the \emph{channel
dispersion}~\cite{Polyanskiy2010}. This asymptotic in $n$
is taken for a fixed channel; the separate limit $\eta\to 0$
studied below characterizes how close that fixed channel is
to fully noisy, and no joint limit is required since our
bounds hold uniformly in $\eta\in(0,1/2)$ for each $n$.
Both $I(X;Y)$ and $V(X;Y)$ require logarithm evaluations of
the channel transition matrix, which is costly or imprecise
in several practical settings.

\subsection{Motivation}

Computing $I(X;Y)$ and $V(X;Y)$ requires logarithm
evaluations, which is problematic in three settings:
(i)~\emph{fixed-point hardware} lacking accurate log units;
(ii)~\emph{channels estimated from pilot symbols}, where a
certified log-free interval computed from the estimate
$\hat{T}$ itself avoids repeated log evaluation, provided
$\hat T$ is close to fully noisy; and
(iii)~\emph{composite channel models} requiring a rate
guaranteed over an entire parametric family.
The chi-squared mutual information
$\chi^2(X;Y)=\sum_j V_j/s_j$ is computable from $T$
and $\bpi$ using only arithmetic and satisfies
$I(X;Y)\leq\chi^2(X;Y)$~\cite{CsiszarKorner2011,Sason2016},
but its use as a \emph{proxy for both $I$ and $V$} in the
second-order rate formula has not been systematically
justified.
This paper provides that justification with explicit bounds
tied to the pointwise concentration parameter~$\eta$.

\subsection{Main Contributions}

\textbf{(C1) Ratio expansion.}
Under $|\delta_{ij}|\leq\eta<1$, where
$\delta_{ij}=(t_{ij}-s_j)/s_j$,
\(
  \frac{I(X;Y)}{\chi^2(X;Y)}
  = \frac{1}{2} - \frac{1}{6}A_3(T,\bpi)
    + O\!\left(\frac{\eta^2}{(1-\eta)^3}\right),
\)
where $|A_3|\leq\eta$, so $I/\chi^2\to 1/2$ as $\eta\to 0$.
Throughout, the ratio expansion in (C1) requires $\eta<1$, while the dispersion equivalence and certified robust-rate results in (C2)--(C3) require the stronger condition $\eta<1/2$.

\textbf{(C2) Dispersion equivalence.}
Under $\eta<1/2$,
\(
  c_-(\eta)\,\chi^2(X;Y)
  \;\leq\; V(X;Y)
  \;\leq\; c_+(\eta)\,\chi^2(X;Y)+\tfrac{\eta^4}{4},
\)
with $c_\pm(\eta)\to 1$ as $\eta\to 0$.

\textbf{(C3) Certified log-free design rate.}
For each fixed $T\in\mathcal{T}(\eta)$, where
$\mathcal{T}(\eta) \triangleq \{T:\max_{i,j}|\delta_{ij}|\leq\eta\}$,
the log-free rate $R_{\mathrm{cert}}(n,\varepsilon)$ is
achievable with total certification gap
\(
  O(\eta)+O\!\left(\frac{\eta}{\sqrt{n}}\right)
  +O\!\left(\frac{\log n}{n}\right),
\)
where the $O(\eta)$ term arises from the
mutual-information penalty and the $O(\eta/\sqrt{n})$
term from the dispersion mismatch.

\subsection{Relation to Prior Work}

The PPV framework~\cite{Polyanskiy2010},
Hayashi~\cite{Hayashi2009}, Tomamichel--Tan~\cite{Tomamichel2013},
and Scarlett et al.~\cite{Scarlett2017} all assume exact
or mismatched-but-known channel models; none provides
certified arithmetic rate intervals under transition-matrix
uncertainty.
The bound $I\leq\chi^2$ is
classical~\cite{CsiszarKorner2011,Sason2016} and local
KL-to-chi-squared equivalence underlies second-order
hypothesis testing~\cite{Polyanskiy2014}, but neither
yields a dispersion proxy with computable two-sided error.
Compound channel results~\cite{Blackwell1959,Wolfowitz1960,Moulin2017}
operate at first or second order under TV-ball geometry;
our class $\mathcal{T}(\eta)$ instead uses a pointwise
relative deviation that admits fully arithmetic certified
intervals.

\subsection{Paper Organisation}

Section~\ref{sec:setup} fixes notation and proves
preliminary identities.
Section~\ref{sec:main} contains the ratio expansion~(C1)
and dispersion equivalence~(C2).
Section~\ref{sec:dispersion} proves the dispersion
decomposition.
Section~\ref{sec:coding} develops the certified robust
design rate~(C3) and the robustness-loss decomposition.

\section{Setup and Notation}
\label{sec:setup}

Let $X \in \mathcal{X} = \{x_1, \ldots, x_N\}$ with input
distribution $\bpi = (\pi_1, \ldots, \pi_N)^\top \in \Delta_N$,
$\pi_i > 0$ for all $i$.
Let $Y \in \mathcal{Y} = \{y_1, \ldots, y_M\}$.
The DMC is specified by $T = [t_{ij}]_{N \times M}$,
$t_{ij} = P(Y=y_j \mid X=x_i) \geq 0$, $\sum_j t_{ij}=1$.
All logarithms are natural; information is in nats.
We write $\bpi$ for the input distribution,
$V_j$ for the column variance defined below, and
$V(X;Y)$ for the channel dispersion.
We define $\varphi(t) \triangleq t\ln t$ with $\varphi(0) = 0$.

\begin{definition}[Column statistics]
\label{def:col}
For each output $y_j$ with $s_j > 0$, let $T_j$ denote the
random variable taking value $t_{ij}$ with probability $\pi_i$.
\begin{align}
  s_j &\triangleq \EX_{\bpi}[T_j] = \textstyle\sum_i \pi_i t_{ij},\\
  V_j &\triangleq \Var_{\bpi}(T_j)
        = \textstyle\sum_i \pi_i(t_{ij}-s_j)^2.
\end{align}
\end{definition}

\begin{definition}[Centered deviations and moments]
\label{def:centeredmoments}
Define $\delta_{ij} \triangleq (t_{ij} - s_j)/s_j$ and
\(
  m_{k,j} \triangleq \EX_{\bpi}[\delta_{ij}^k]
  = \textstyle\sum_i \pi_i \delta_{ij}^k, \, k \geq 1.
\)
Then $m_{1,j} = 0$, $m_{2,j} = V_j/s_j^2$, and
$\chi^2(X;Y) = \sum_j s_j m_{2,j}$.
\end{definition}

\begin{definition}[Pointwise concentration parameter]
\label{def:eta}
The \emph{pointwise concentration parameter} is
\(
  \eta \triangleq \max_{i,j}\,|\delta_{ij}|
  = \max_{i,j}\frac{|t_{ij}-s_j|}{s_j}.
\)
This depends on both $T$ and $\bpi$ through $s_j$; small
$\eta$ means the channel is close to the fully noisy channel
$t_{ij}=s_j\,\forall i,j$, under which $I=V=\chi^2=0$.
\end{definition}

\begin{definition}[Information density and posterior statistics]
\label{def:density}
For $t_{ij} > 0$, the \emph{information density} is
$\imath(x_i; y_j) \triangleq \log(t_{ij}/s_j)$.
The posterior weights are $r_{ij} \triangleq \pi_i t_{ij}/s_j$.
The \emph{column mean} and \emph{column variance} are
\begin{align}
  \mu_j &\triangleq \EX[\imath(X;Y) \mid Y=y_j]
         = \textstyle\sum_i r_{ij}\log(t_{ij}/s_j),\\
  \sigma_j^2 &\triangleq \Var(\imath(X;Y) \mid Y=y_j).
\end{align}
\end{definition}

\begin{definition}[Channel dispersion]
\label{def:dispersion}
\begin{equation}
  V(X;Y) \triangleq \Var(\imath(X;Y))
  = \EX[\imath(X;Y)^2] - I(X;Y)^2.
\end{equation}
\end{definition}

\begin{remark}[Column mean as KL divergence]
\label{rem:kl}
Since $r_{ij}/\pi_i = t_{ij}/s_j$, we have
$\mu_j = \D_{\mathrm{KL}}(P_{X|Y=y_j}\|P_X) \geq 0$.
\end{remark}

\begin{definition}[Jensen gap and total variation]
\label{def:JG}
\begin{align}
  \JG_j &\triangleq \EX_{\bpi}[\varphi(T_j)] - \varphi(s_j)
          = \EX_{\bpi}[T_j\ln T_j] - s_j\ln s_j \geq 0,\\
  \TV_j &\triangleq \TV(P_{X|Y=y_j}, P_X)
          = \tfrac{1}{2}\textstyle\sum_i|r_{ij}-\pi_i|.
\end{align}
\end{definition}

\begin{definition}[Chi-squared mutual information]
\label{def:chi2}
\begin{equation}
  \chi^2(X;Y) \triangleq \sum_{i,j}\frac{\pi_i t_{ij}^2}{s_j} - 1
  = \sum_j \frac{V_j}{s_j}.
  \label{eq:chi2}
\end{equation}
\end{definition}

\subsection{Column-Wise Framework: Classical Identities}
\label{sec:mi}

\begin{proposition}[Jensen-gap decomposition]
\label{thm:decomp}
For any DMC with full-support $\bpi$:
\begin{equation}
  I(X;Y) = \sum_{j=1}^{M} \JG_j
          = \sum_{j=1}^M s_j\,\D_{\mathrm{KL}}(P_{X|Y=y_j}\|P_X).
  \label{eq:decomp}
\end{equation}
\end{proposition}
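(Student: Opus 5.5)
The plan is to establish the two equalities in \eqref{eq:decomp} by direct algebra, working column by column.

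\emph{Step 1: second equality.}
I start from $I(X;Y)=\sum_{i,j}\pi_i t_{ij}\ln(t_{ij}/s_j)$. Here the sum runs only over pairs with $t_{ij}>0$, which matches the convention $\varphi(0)=0$. Outputs with $s_j=0$ force $t_{ij}=0$ for every $i$, since $\pi_i>0$, so those columns contribute nothing to any of the quantities and can be discarded.

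For each remaining $j$, I use $\pi_i t_{ij}=s_j r_{ij}$ together with $t_{ij}/s_j=r_{ij}/\pi_i$ (Remark~\ref{rem:kl}). This rewrites the inner sum as
\[
s_j\sum_i r_{ij}\ln(r_{ij}/\pi_i)=s_j\,\D_{\mathrm{KL}}(P_{X|Y=y_j}\|P_X)=s_j\mu_j.
\]
Summing over $j$ then gives the second equality.

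\emph{Step 2: first equality.}
Next I show the per-column identity $\JG_j=s_j\mu_j$. Expanding the column sum gives
\[
s_j\mu_j=\sum_i\pi_i t_{ij}\ln t_{ij}-\Big(\sum_i\pi_i t_{ij}\Big)\ln s_j.
\]
Since $\sum_i\pi_i t_{ij}=s_j$, the second term is $s_j\ln s_j=\varphi(s_j)$. The first term is $\EX_{\bpi}[\varphi(T_j)]$. Their difference is exactly $\JG_j$ from Definition~\ref{def:JG}.

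Summing over $j$ then yields $I=\sum_j\JG_j$. As a by-product, the nonnegativity $\JG_j\ge 0$ follows either from convexity of $\varphi$ via Jensen, or from the nonnegativity of KL divergence.

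\emph{Main obstacle.}
There is no genuine obstacle; the argument is pure bookkeeping. The only point that needs care is the handling of zeros. The convention $\varphi(0)=0$ must agree with the convention that terms with $t_{ij}=0$ are omitted from $I$ and from the KL divergence. The full-support assumption on $\bpi$ is what guarantees this consistency, because it makes $s_j=0$ equivalent to the whole column vanishing.
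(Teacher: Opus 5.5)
Your proof is correct and is essentially the paper's argument: the paper obtains $I=\sum_j \JG_j$ from $I=H(Y)-H(Y|X)$, which is exactly your splitting $\ln(t_{ij}/s_j)=\ln t_{ij}-\ln s_j$ together with $\sum_i \pi_i t_{ij}=s_j$. It then gets the KL form through the same substitution $r_{ij}=\pi_i t_{ij}/s_j$ that you use, so only the order of the two steps differs, and your explicit treatment of zero entries and of columns with $s_j=0$ (via full support of $\bpi$) is care the paper leaves implicit.
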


\begin{proof}
$I(X;Y) = H(Y) - H(Y|X)
= -\sum_j\varphi(s_j) + \sum_j\EX_{\bpi}[\varphi(T_j)]
= \sum_j \JG_j$.
The KL identity follows from $r_{ij} = \pi_i t_{ij}/s_j$:
$\JG_j = \sum_i \pi_i t_{ij}\ln(t_{ij}/s_j)
= s_j\sum_i r_{ij}\ln(r_{ij}/\pi_i)$.
\end{proof}

\begin{proposition}[Pinsker lower bound~{\cite{CsiszarKorner2011}}]
\label{thm:lb}
$I(X;Y) \geq 2\sum_{j=1}^M s_j\,\TV_j^2$.
\end{proposition}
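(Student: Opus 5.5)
The plan is to combine the Jensen-gap decomposition of Proposition~\ref{thm:decomp} with Pinsker's inequality applied separately to each output column. By Proposition~\ref{thm:decomp},
\[
  I(X;Y)=\sum_{j=1}^M s_j\,\D_{\mathrm{KL}}(P_{X|Y=y_j}\|P_X).
\]
So it suffices to lower-bound each conditional divergence in terms of $\TV_j$ and then sum with the nonnegative weights $s_j$.

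For each $j$ with $s_j>0$, the posterior $P_{X|Y=y_j}=(r_{1j},\dots,r_{Nj})$ is a probability vector, since $\sum_i r_{ij}=\sum_i \pi_i t_{ij}/s_j=1$. The prior $P_X=\bpi$ has full support, so the divergence is finite; this also follows from Remark~\ref{rem:kl}, since $r_{ij}/\pi_i=t_{ij}/s_j$. Pinsker's inequality in nats then gives
\[
  \D_{\mathrm{KL}}(P\|Q)\ge 2\,\TV(P,Q)^2,
  \qquad \TV(P,Q)=\tfrac12\sum_i |p_i-q_i|,
\]
and applying it to each column yields $\D_{\mathrm{KL}}(P_{X|Y=y_j}\|P_X)\ge 2\,\TV_j^2$. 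Outputs with $s_j=0$ contribute zero to both sides, because their weight $s_j$ vanishes (and $\JG_j=0$), so they can be dropped.

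Multiplying each column bound by $s_j\ge 0$ and summing over $j$ gives $I(X;Y)\ge 2\sum_j s_j\TV_j^2$.

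The only real point of care is the normalization convention. With $\TV$ defined with the factor $\tfrac12$, as in Definition~\ref{def:JG}, and with natural logarithms, the Pinsker constant is exactly $2$. I would state this explicitly and cite~\cite{CsiszarKorner2011} for Pinsker's inequality rather than reprove it. If a self-contained argument were wanted, the standard route is to reduce to the binary case via the data-processing inequality, using the partition $\{i: r_{ij}\ge\pi_i\}$, and then verify the binary inequality $p\ln\frac pq+(1-p)\ln\frac{1-p}{1-q}\ge 2(p-q)^2$ by differentiating in $q$.
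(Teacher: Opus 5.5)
Your proof is correct and matches the paper's argument: apply Pinsker's inequality column by column to get $\D_{\mathrm{KL}}(P_{X|Y=y_j}\|P_X)\ge 2\TV_j^2$, weight each column by $s_j$, sum over $j$, and invoke Proposition~\ref{thm:decomp}. Your additional remarks are sound and merely make explicit what the paper leaves implicit: the constant $2$ comes from the $\tfrac12$-normalized $\TV$ with natural logarithms, and columns with $s_j=0$ contribute nothing.
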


\begin{proof}
Apply Pinsker's inequality to each column:
$\mu_j = \D_{\mathrm{KL}}(P_{X|Y=y_j}\|P_X) \geq 2\TV_j^2$.
Multiply by $s_j$, sum over $j$, and apply
Proposition~\ref{thm:decomp}.
\end{proof}

\section{Main Results: Ratio Expansion and Dispersion Equivalence}
\label{sec:main}

\subsection{Curvature Lemma}

\begin{lemma}[Curvature ratio of $\varphi(t) = t\ln t$]
\label{lem:curvature}
For $t, s > 0$, $t \neq s$, define
\begin{equation}
  R(t,s) \triangleq
  \frac{\varphi(t) - \varphi(s) - (t-s)\varphi'(s)}{(t-s)^2}
  = \int_0^1 \frac{1-u}{s + u(t-s)}\,du.
\end{equation}
Then $0 < R(t,s) < 1/s$ for all $t > 0$, $t \neq s$,
with $\sup_{t>0, t\neq s} R(t,s) = 1/s$ approached as $t \to 0^+$.
\end{lemma}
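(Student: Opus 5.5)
The plan is to obtain the integral representation from Taylor's theorem with integral remainder, and then read off the bounds directly from the integrand.

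\textbf{Integral formula.} Since $\varphi''(x)=1/x$ is continuous on $(0,\infty)$, and the segment between $s$ and $t$ lies in $(0,\infty)$, Taylor's formula with integral remainder gives
\begin{equation*}
\varphi(t)-\varphi(s)-(t-s)\varphi'(s)=(t-s)^2\int_0^1(1-u)\,\varphi''\bigl(s+u(t-s)\bigr)\,du .
\end{equation*}
The substitution $x=s+u(t-s)$ in $\int_s^t(t-x)\varphi''(x)\,dx$ is what produces the weight $(1-u)(t-s)^2$. Dividing by $(t-s)^2\neq 0$ yields the stated formula for $R(t,s)$.

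\textbf{Bounds.} On $u\in[0,1)$ the integrand $(1-u)/(s+u(t-s))$ is strictly positive, so $R>0$.

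For the upper bound, write $s+u(t-s)=(1-u)s+ut$. Then the integrand equals
\begin{equation*}
\frac{1-u}{(1-u)s+ut}\le \frac{1}{s}.
\end{equation*}
Since $t>0$, this inequality is strict for every $u\in(0,1]$. Integrating over $[0,1]$ therefore gives $R(t,s)<1/s$.

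This is the step that needs care. The naive bound $\int_0^1 (1-u)/s\,du=1/(2s)$ fails when $t<s$, because then the denominator is smaller than $s$. Rewriting the denominator as a convex combination handles both cases $t<s$ and $t>s$ uniformly.

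\textbf{Supremum.} As $t\to0^+$, the integrand increases pointwise to $(1-u)/((1-u)s)=1/s$ for $u<1$. Monotone convergence then gives $R(t,s)\to 1/s$.

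As a cross-check, one can evaluate directly at $t\to 0$:
\begin{equation*}
\frac{0-s\ln s+s(\ln s+1)}{s^2}=\frac{1}{s}.
\end{equation*}
Combined with the strict upper bound, this shows $\sup R=1/s$, approached but not attained as $t\to0^+$.
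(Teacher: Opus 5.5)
Your proof is correct, and it reaches the upper bound by a different route than the paper. The paper splits into cases. For $t>s$ it uses $s+u(t-s)>s$ to get the sharper $R<1/(2s)$. For $0<t<s$ it leaves the integral and works with the closed form $R(t,s)=[t\ln(t/s)-t+s]/(t-s)^2$, reducing $R<1/s$ to $f(\lambda)=\ln\lambda-1+1/\lambda>0$ for $\lambda=s/t>1$ (equivalently $\ln x<x-1$ with $x=t/s$). It then reads off the supremum from the closed form using $t\ln t\to 0$; this is your ``cross-check''. Your rewriting of the denominator as the convex combination $(1-u)s+ut$ gives the pointwise bound $\frac{1-u}{(1-u)s+ut}\le \frac{1}{s}$ for both signs of $t-s$ at once. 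The same expression turns the supremum claim into a one-line monotone (or dominated) convergence argument, with no closed form or auxiliary logarithmic inequality. Your route is shorter and also avoids a small slip in the paper, which writes $f'(\lambda)=1-1/\lambda$ when in fact $f'(\lambda)=1/\lambda-1/\lambda^2$; the sign is the same, so the paper's conclusion survives. What the paper's case split buys in exchange is the sharper $1/(2s)$ bound for $t>s$, which shows the supremum $1/s$ is approached only from the side $t<s$. Your uniform bound does not separate the two sides, but the lemma as stated does not need that.
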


\begin{proof}
Since $\varphi''(\xi) = 1/\xi > 0$, Taylor's theorem gives the
integral representation, which is strictly positive.
For $t > s$: the integrand satisfies $(1-u)/[s+u(t-s)] < (1-u)/s$,
giving $R < \int_0^1(1-u)/s\,du = 1/(2s) < 1/s$.
For $0 < t < s$: set $\lambda = s/t > 1$; the bound $R < 1/s$
is equivalent to $f(\lambda) = \ln\lambda - 1 + 1/\lambda > 0$
for $\lambda > 1$, which holds since $f(1) = 0$ and
$f'(\lambda) = 1 - 1/\lambda > 0$ for $\lambda > 1$.
As $t \to 0^+$: $t\ln t \to 0$, so $R(t,s) \to 1/s$.
\end{proof}

\begin{proposition}[Chi-squared upper bound~{\cite{CsiszarKorner2011,NishiyamaSason2020}}]
\label{thm:ub}
$I(X;Y) \leq \chi^2(X;Y)$, with equality if and only if
$t_{ij} = s_j$ for all $i,j$.
\end{proposition}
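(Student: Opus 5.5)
The plan is to combine the Jensen-gap decomposition (Proposition~\ref{thm:decomp}) with the curvature bound of Lemma~\ref{lem:curvature}, column by column. Fix an output $y_j$ with $s_j>0$. Because $\EX_{\bpi}[T_j-s_j]=0$, the linear Taylor term drops out when we average over $i$. Hence
\begin{equation*}
  \JG_j = \sum_i \pi_i\bigl[\varphi(t_{ij})-\varphi(s_j)-(t_{ij}-s_j)\varphi'(s_j)\bigr]
        = \sum_{i:\,t_{ij}\neq s_j} \pi_i\,R(t_{ij},s_j)\,(t_{ij}-s_j)^2 .
\end{equation*}

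Lemma~\ref{lem:curvature} gives $R(t,s)<1/s$ for $t>0$, $t\neq s$. The case $t=0$ needs a separate check. There the bracket equals $-\varphi(s)+s\varphi'(s)=s=(0-s)^2/s$, so the bound holds with equality. We therefore get $R(t_{ij},s_j)\le 1/s_j$ for every $i$, with strict inequality whenever $t_{ij}>0$ and $t_{ij}\neq s_j$. This yields $\JG_j\le \sum_i\pi_i(t_{ij}-s_j)^2/s_j = V_j/s_j$. Summing over $j$ and using Proposition~\ref{thm:decomp} and \eqref{eq:chi2} gives $I(X;Y)\le\chi^2(X;Y)$. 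Outputs with $s_j=0$ have $t_{ij}=0$ for all $i$, since every $\pi_i>0$, so they contribute zero to both sides.

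For the equality case, sufficiency is immediate: if $t_{ij}=s_j$ for all $i,j$, then $I=\chi^2=0$. For necessity, equality forces $\JG_j=V_j/s_j$ in every column. That means every $i$ with $t_{ij}\neq s_j$ must have $t_{ij}=0$, the only case where the curvature bound is tight.

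This is the main obstacle, because the pointwise bound really is attained at $t=0$. A column in which each entry is either $0$ or $s_j$ is not excluded by the per-column argument. I would first try to rule this out globally using the row constraint $\sum_j t_{ij}=1$. Suppose some $t_{ij}\neq s_j$ with equality holding. Then every nonzero entry $t_{ij}$ equals $s_j$. Each row $i$ then has the form $t_{ij}=s_j\mathbf 1[j\in S_i]$, which forces $\sum_{j\in S_i}s_j=1=\sum_j s_j$. Hence $S_i$ contains every $j$ with $s_j>0$, so $t_{ij}=s_j$ for all $i,j$, a contradiction.

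So equality implies the fully noisy channel, and I expect this row-sum argument to close the gap. The delicate point is simply remembering to treat $t=0$ separately, since Lemma~\ref{lem:curvature} as stated covers only $t>0$.
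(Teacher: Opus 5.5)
Your proof is correct and follows the paper's route: write $\JG_j=\EX_{\bpi}[(T_j-s_j)^2R(T_j,s_j)]$ (the linear Taylor term drops since $\EX_{\bpi}[T_j]=s_j$), bound $R$ by $1/s_j$ via Lemma~\ref{lem:curvature}, and sum over columns using Proposition~\ref{thm:decomp}. You are more careful than the paper, which asserts the strict inequality without noting that the lemma excludes $t_{ij}=0$, where the curvature bound is attained; your row-sum step is valid but not needed, because a column with entries in $\{0,s_j\}$ and $s_j>0$ satisfies $s_j=s_j\sum_{i:\,t_{ij}=s_j}\pi_i$, which by full support already forces $t_{ij}=s_j$ for every $i$.
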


\begin{proof}
From Definition~\ref{def:JG}:
$\JG_j = \EX_{\bpi}[(T_j-s_j)^2 R(T_j,s_j)]
< (1/s_j)\EX_{\bpi}[(T_j-s_j)^2] = V_j/s_j$
by Lemma~\ref{lem:curvature}.
Summing over $j$ and applying
Proposition~\ref{thm:decomp} yields the bound.
Equality holds iff $T_j = s_j$ a.s.\ for every $j$.
\end{proof}

\begin{corollary}[Log-free sandwiching inequality]
\label{cor:sandwich}
For any DMC with full-support $\bpi$:
\begin{equation}
  2\sum_{j=1}^M s_j\,\TV_j^2
  \;\leq\; I(X;Y)
  \;\leq\; \chi^2(X;Y).
  \label{eq:sandwich}
\end{equation}
Both bounds are computable from $\{t_{ij}\}$ and $\bpi$
without logarithms.
\end{corollary}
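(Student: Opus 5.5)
The plan is to combine the two previously established one-sided bounds, then check that both outer quantities are computable with arithmetic only. The upper bound $I(X;Y)\le\chi^2(X;Y)$ is exactly Proposition~\ref{thm:ub}. The lower bound $2\sum_j s_j\TV_j^2\le I(X;Y)$ is exactly Proposition~\ref{thm:lb}. Chaining the two gives \eqref{eq:sandwich}. The only hypothesis needed is full support of $\bpi$, which makes both propositions, and the decomposition of Proposition~\ref{thm:decomp} on which they rest, applicable.

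For the lower bound, I would recall why it holds. By Proposition~\ref{thm:decomp}, $I=\sum_j s_j\mu_j$ with $\mu_j=\D_{\mathrm{KL}}(P_{X|Y=y_j}\|P_X)$. Pinsker's inequality in nats gives $\mu_j\ge 2\TV_j^2$ for each column. Weighting by $s_j\ge 0$ and summing yields the claim. Any output with $s_j=0$ contributes nothing to either side, so it can be dropped from the sums.

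For the upper bound, Proposition~\ref{thm:ub} uses the curvature bound $R(t,s)<1/s$ of Lemma~\ref{lem:curvature}. This gives $\JG_j\le V_j/s_j$ column by column, with equality only when $V_j=0$. Summing over $j$ gives $I\le\sum_j V_j/s_j=\chi^2(X;Y)$ by \eqref{eq:chi2}. If $t_{ij}=0$ for some entry, the same bound holds, since $\varphi(0)=0$ and Lemma~\ref{lem:curvature} gives $R(0,s)=1/s$ as a limit. So no extra case split is needed.

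The remaining step is the computability claim, which I would verify directly from the formulas:
\begin{itemize}
\item $s_j=\sum_i\pi_i t_{ij}$ uses only sums and products;
\item $\chi^2(X;Y)=\sum_{i,j}\pi_i t_{ij}^2/s_j-1$ also needs only division;
\item $\TV_j=\tfrac12\sum_i|\pi_i t_{ij}/s_j-\pi_i|$ needs division and absolute values.
\end{itemize}
None of these involves a logarithm. There is no real obstacle here. The only point needing care is stating explicitly that zero-probability outputs and zero entries of $T$ are handled by the convention $\varphi(0)=0$.
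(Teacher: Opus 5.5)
Your proposal is correct and matches the paper: the corollary is just the conjunction of Proposition~\ref{thm:lb} (Pinsker applied column-wise through Proposition~\ref{thm:decomp}) and Proposition~\ref{thm:ub} (the curvature bound of Lemma~\ref{lem:curvature}), with log-free computability read off from the formulas for $s_j$, $\chi^2(X;Y)$ and $\TV_j$. Your observation that $R(0,s_j)=1/s_j$ holds exactly at zero entries, so the pointwise curvature bound is non-strict there, correctly refines the strict-inequality wording in the paper's proof of Proposition~\ref{thm:ub}, and it does not affect the sandwich.
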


\begin{proposition}[Tightness of upper bound, conjectured]
\label{prop:ub_tight}
For any column $j$ with $V_j > 0$, we conjecture
$\JG_j/(V_j/s_j) \to 1$ as $\min_i t_{ij} \to 0$; this is
open in general and verified only in the binary case.
\end{proposition}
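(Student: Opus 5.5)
The plan is to reduce the column ratio to a weighted average of normalized curvature ratios, and then to apply Lemma~\ref{lem:curvature}. Fix a column $j$ with $V_j>0$ and write $t_i=t_{ij}$, $s=s_j$. By the integral representation used in the proof of Proposition~\ref{thm:ub},
\[
\frac{\JG_j}{V_j/s}
=\sum_i w_i\, sR(t_i,s),\qquad
w_i\triangleq\frac{\pi_i(t_i-s)^2}{V_j},\qquad \sum_i w_i=1 .
\]
The ratio is therefore a convex combination of the quantities $sR(t_i,s)\in(0,1)$. Lemma~\ref{lem:curvature} states that $sR(t,s)\to 1$ as $t\to0^+$, and it does so uniformly for $s$ bounded away from $0$, by continuity of $R$. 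The first step is to show that any row $i$ with $t_i\to0$ contributes $w_i\cdot(1+o(1))$.

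The second step is to handle the remaining rows $t_i\not\to0$. Here $sR(t_i,s)$ stays bounded strictly below $1$; for $t_i>s$ it is at most $1/2$ by the lemma. The limit ratio is then $1$ exactly when the $\chi^2$-weight $\sum_{i:\,t_i\not\to0} w_i$ tends to $0$. So the general statement reduces to controlling these weights, and this is where the conjecture is open in general.

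The binary case ($N=2$) is the verification the statement refers to, and I would carry it out explicitly. With $d=t_2-t_1$ we have $t_1-s=-\pi_2 d$, $t_2-s=\pi_1 d$ and $V_j=\pi_1\pi_2 d^2$. This gives $w_1=\pi_2$ and $w_2=\pi_1$. Letting $t_1\to0$ gives the closed form $s=\pi_2t_2$ and $\JG_j=-\pi_2t_2\ln\pi_2$, with $V_j/s=\pi_1t_2$. Hence
\[
\frac{\JG_j}{V_j/s}\to\frac{-\pi_2\ln\pi_2}{\pi_1}.
\]
This is consistent with the weighted-average formula, since $\pi_2\cdot1+\pi_1\,sR(t_2,\pi_2t_2)$ gives the same expression. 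The value tends to $1$ exactly when the surviving weight $w_2=\pi_1$ vanishes, i.e.\ when $\pi_2\to1$. I would check this by expanding $-\pi_2\ln\pi_2=\pi_1-\pi_1^2/2+O(\pi_1^3)$.

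The main obstacle is this last step. At fixed $\pi$ the binary limit is strictly less than $1$; for example it equals $\ln 2$ at $\pi=(1/2,1/2)$. The claimed convergence therefore holds only along limits in which the input weight of the non-vanishing row also goes to $0$. So the binary verification must be stated in that joint regime, and the planned argument establishes the convergence only there; I would be explicit about this restriction. Extending beyond $N=2$ would require bounding the weights $w_i$ of the surviving rows, which the curvature lemma alone does not provide.
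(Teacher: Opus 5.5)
The paper gives no proof here. The statement is labelled a conjecture, and its only support is the remark that it is ``verified in the binary case.'' Your computation is correct, but it refutes that remark rather than verifying a restricted form of it.

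Take $N=2$ and any fixed full-support $\bpi$, which is the paper's standing assumption. Let $t_{1j}\to 0$ with $t_{2j}$ fixed. Then $\JG_j$ and $V_j/s_j$ are continuous in $t_{1j}$, and $V_j$ stays positive. So the ratio tends to its value at $t_{1j}=0$, namely $-\pi_2\ln\pi_2/\pi_1=\pi_2+\pi_1\,sR(t_{2j},s)$ with $s=\pi_2 t_{2j}$. By Lemma~\ref{lem:curvature} this is strictly less than $1$.

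The paper's own Example~\ref{ex:ratios} already contains an instance. For $Z(0.30)$ with uniform input, the output column containing the zero entry has ratio exactly $\ln 2\approx 0.693$. Combining it with the other column reproduces the reported $I/\chi^2\approx 0.635$. You should therefore present the binary computation as a counterexample to the proposition as written, not as a proof under an added restriction.

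Two points in your last paragraph need fixing.

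\emph{Wrong row.} The ratio tends to $1$ when $\pi_1\to 0$, i.e.\ when the input mass of the row whose entry \emph{vanishes} goes to zero. Equivalently, the $\chi^2$-weight $w_2=\pi_1$ of the surviving row vanishes, exactly as you said earlier. If instead the input weight $\pi_2$ of the non-vanishing row goes to zero, the ratio tends to $0$.

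\emph{Larger $N$.} The case $N>2$ is not open, and refuting it needs no control of the weights beyond Jensen's inequality. Let $t_{1j}\to 0$ with the other entries of column $j$ fixed and positive, and write $s=s_j$.
\begin{itemize}
\item In the limit, the rows $i\ge 2$ have $\bpi$-conditional mean $s/(1-\pi_1)$, so $\sum_{i\ge 2}\pi_i(t_{ij}-s)^2\ge \pi_1^2 s^2/(1-\pi_1)$.
\item Since $\pi_1(t_{1j}-s)^2=\pi_1 s^2$ there, this gives $w_1\le 1-\pi_1$, hence $\sum_{i\ge 2}w_i\ge \pi_1>0$.
\item Each row $i\ge 2$ with $w_i>0$ has $0<t_{ij}\neq s$, so $sR(t_{ij},s)<1$.
\end{itemize}
Hence the limit $w_1+\sum_{i\ge 2}w_i\,sR(t_{ij},s)$ is strictly below $1$. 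The conjecture therefore fails for every $N\ge 2$ at fixed full-support $\bpi$. Any corrected version would also have to drive the input mass of the vanishing rows to zero.
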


\subsection{Third-Order Ratio Expansion}
\label{sec:ratio}

\begin{theorem}[Ratio expansion with skewness correction]
\label{thm:mi_expansion}
Suppose $|\delta_{ij}| \leq \eta < 1$ for all $i,j$.
Define the aggregate skewness coefficient
\(
  A_3(T,\bpi) \triangleq \frac{1}{\chi^2(X;Y)}
  \sum_{j=1}^M s_j m_{3,j}, \,
  m_{3,j} = \textstyle\sum_i \pi_i \delta_{ij}^3.
\)
Then
\begin{equation}
  \frac{I(X;Y)}{\chi^2(X;Y)}
  = \frac{1}{2} - \frac{1}{6} A_3(T,\bpi) + \varepsilon(T,\bpi),
  \label{eq:IA3}
\end{equation}
where $|\varepsilon(T,\bpi)| \leq \eta^2/[12(1-\eta)^3]$.
In particular, $|A_3(T,\bpi)| \leq \eta$, so the deviation
\begin{equation}
  \left|\frac{I(X;Y)}{\chi^2(X;Y)} - \frac{1}{2}\right|
  \leq \frac{\eta}{6} + \frac{\eta^2}{12(1-\eta)^3}
  = O(\eta),
\end{equation}
and $I(X;Y)/\chi^2(X;Y) \to 1/2$ as $\eta \to 0$.
\end{theorem}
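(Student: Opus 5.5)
The plan is to work column by column through the Jensen-gap decomposition (Proposition~\ref{thm:decomp}), rewriting each $\JG_j$ as a one-dimensional expectation in the relative deviation $\delta_{ij}$. Writing $t_{ij}=s_j(1+\delta_{ij})$, we have
\[
\varphi(t_{ij})=s_j(1+\delta_{ij})\ln s_j+s_j(1+\delta_{ij})\ln(1+\delta_{ij}).
\]
Averaging over $\bpi$ and using $m_{1,j}=0$, the $s_j\ln s_j$ terms cancel against $\varphi(s_j)$. This gives the exact identity
\[
\JG_j = s_j\,\EX_{\bpi}[g(\delta_{ij})],\qquad g(x)\triangleq(1+x)\ln(1+x).
\]
I will assume throughout that $\chi^2(X;Y)>0$, since otherwise the ratio and $A_3$ are undefined. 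This is the non-degenerate case of Proposition~\ref{thm:ub}.

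Next I Taylor-expand $g$ about $0$ up to third order, keeping a Lagrange remainder. The derivatives are $g'(x)=\ln(1+x)+1$, $g''(x)=1/(1+x)$, $g'''(x)=-1/(1+x)^2$ and $g''''(x)=2/(1+x)^3$. Hence
\[
g(x)=x+\tfrac{x^2}{2}-\tfrac{x^3}{6}+\tfrac{x^4}{12(1+\xi)^3}
\]
for some $\xi$ between $0$ and $x$. Taking $\EX_{\bpi}$ with $x=\delta_{ij}$, the linear term vanishes because $m_{1,j}=0$. Multiplying by $s_j$ and summing over $j$ then gives
\[
I(X;Y)=\tfrac12\sum_j s_j m_{2,j}-\tfrac16\sum_j s_j m_{3,j}+E,
\]
where $\sum_j s_j m_{2,j}=\chi^2(X;Y)$ by Definition~\ref{def:centeredmoments}. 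Dividing by $\chi^2$ yields the expansion (\ref{eq:IA3}) with $\varepsilon=E/\chi^2$.

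The step needing care is bounding the remainder $E$ uniformly. Since $|\delta_{ij}|\le\eta<1$, the point $\xi$ lies in $[-\eta,\eta]$, so $1+\xi\ge 1-\eta>0$. Combining this with $\delta_{ij}^4\le \eta^2\delta_{ij}^2$ gives
\[
|E|\le \frac{\eta^2}{12(1-\eta)^3}\sum_j s_j m_{2,j}=\frac{\eta^2}{12(1-\eta)^3}\,\chi^2 .
\]
This is exactly $|\varepsilon|\le \eta^2/[12(1-\eta)^3]$. The remainder is nonnegative, so the bound could even be made one-sided, but the two-sided form suffices. Finally, $|\delta_{ij}^3|\le\eta\,\delta_{ij}^2$ gives $|m_{3,j}|\le \eta\, m_{2,j}$, hence $|A_3|\le \eta$. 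The triangle inequality then produces the stated deviation bound $\eta/6+\eta^2/[12(1-\eta)^3]=O(\eta)$ and the limit $I/\chi^2\to1/2$.

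I expect no serious obstacle. The only delicate points are the exact cancellation that removes the $\ln s_j$ terms, which uses $\sum_i\pi_i\delta_{ij}=0$, and the choice of where to truncate. Stopping at the cubic term with a fourth-order Lagrange remainder is precisely what produces the $(1-\eta)^{-3}$ factor and the constant $1/12$.
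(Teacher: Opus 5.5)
Your proposal is correct and follows the paper's proof essentially step for step. Both use a Taylor expansion of $(1+u)\ln(1+u)$ with fourth-order Lagrange remainder, averaged under $\bpi$, together with the bounds $m_{4,j}\le\eta^2 m_{2,j}$ and $|m_{3,j}|\le\eta\, m_{2,j}$.

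The only difference is cosmetic. You obtain $\JG_j=s_j\EX_{\bpi}[g(\delta_{ij})]$ by cancelling the $\ln s_j$ terms, whereas the paper obtains the same identity for $\mu_j=\JG_j/s_j$ through the change of measure $r_{ij}=\pi_i(1+\delta_{ij})$. Your observation that the remainder is nonnegative, so that $0\le\varepsilon\le\eta^2/[12(1-\eta)^3]$, is a valid small sharpening of the paper's two-sided bound.
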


\begin{proof}
Let $f(u)=(1+u)\ln(1+u)$.
By Taylor's theorem at $u=0$ with remainder in Lagrange form:
\(
  f(u) = u + \tfrac{1}{2}u^2 - \tfrac{1}{6}u^3 + r_4(u),
  \,
  |r_4(u)| \leq \frac{|u|^4}{12(1-\eta)^3}
  \)
for  
\(
|u| \leq \eta,
\)
where the remainder bound follows from
$f^{(4)}(u) = 2/(1+u)^3$ and
$(1+u)^{-3} \leq (1-\eta)^{-3}$ for $|u| \leq \eta$.
Now $\imath(x_i;y_j) = \log(t_{ij}/s_j) = \log(1+\delta_{ij})$,
so $\mu_j = \EX_{r_j}[\log(1+\delta_{ij})]$.
But $\EX_{r_j}[g] = \EX_{\bpi}[(1+\delta_{ij})g(\delta_{ij})]
= \EX_{\bpi}[f(\delta_{ij})]$ for $g(u)=\log(1+u)$,
since $r_{ij} = \pi_i(1+\delta_{ij})$.
Taking $\bpi$-expectation of $f(\delta_{ij})$ and using
$m_{1,j}=0$ gives
\(
  \mu_j = \tfrac{1}{2}m_{2,j} - \tfrac{1}{6}m_{3,j} + R_{4,j},
  \,
  |R_{4,j}| \leq \frac{m_{4,j}}{12(1-\eta)^3}.
\)
Multiplying by $s_j$ and summing:
\(
  I(X;Y)
  = \tfrac{1}{2}\chi^2(X;Y)
    - \tfrac{1}{6}\textstyle\sum_j s_j m_{3,j} + R_I,
  \,
  |R_I| \leq \frac{\sum_j s_j m_{4,j}}{12(1-\eta)^3}.
\)
Dividing by $\chi^2(X;Y) = \sum_j s_j m_{2,j}$ gives~\eqref{eq:IA3}.
Since $m_{4,j} \leq \eta^2 m_{2,j}$ pointwise,
$|{\varepsilon}| \leq \eta^2/[12(1-\eta)^3]$.
Finally, $|m_{3,j}| \leq \eta m_{2,j}$ gives $|A_3| \leq \eta$.
\end{proof}

\begin{remark}
\label{rem:odd}
For BSC$(p)$ with uniform input, each column is a symmetric
two-point distribution, so $m_{3,j}=0$ and
$|I/\chi^2 - 1/2| = O(\eta^2)$.
For asymmetric channels the $A_3$ term is $O(\eta)$ and
dominates the error.
\end{remark}

\begin{example}[BSC and Z-channel]
\label{ex:ratios}
For BSC$(0.45)$ with uniform input, $\eta = 0.10$ and
the bound gives $|I/\chi^2 - 1/2| \leq 0.017$;
the exact value is $I/\chi^2 = 0.501$.
For the Z-channel $Z(0.30)$, $\max_{i,j}|\delta_{ij}| = 1$,
so the hypothesis $\eta < 1$ fails and
Theorem~\ref{thm:mi_expansion} does not apply;
the exact ratio is $0.635$.
\end{example}

\subsection{Two-Sided Dispersion Equivalence}
\label{ssec:lc}

\begin{lemma}[Variance comparison via derivative bounds]
\label{lem:varcomp}
Let $g:[-\eta,\eta]\to\R$ be differentiable with
$0 < \alpha \leq g'(u) \leq \beta$ on $[-\eta,\eta]$,
and let $U$ satisfy $|U| \leq \eta$ a.s.
Then $\alpha^2\Var(U) \leq \Var(g(U)) \leq \beta^2\Var(U)$.
\end{lemma}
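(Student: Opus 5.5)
The proof goes through the symmetric-pair representation of variance,
\[
  \Var(Z) = \tfrac12\,\EX\bigl[(Z-Z')^2\bigr],
\]
where $Z'$ is an independent copy of $Z$. This avoids comparing the means $\EX[g(U)]$ and $g(\EX U)$, which in general do not line up. Let $U'$ be an independent copy of $U$, so that $g(U')$ is an independent copy of $g(U)$. Then
\[
  \Var(g(U)) = \tfrac12\,\EX\bigl[(g(U)-g(U'))^2\bigr]
  \quad\text{and}\quad
  \Var(U) = \tfrac12\,\EX\bigl[(U-U')^2\bigr].
\]
Both identities need only finite second moments. These hold here: $|U|\le\eta$ a.s., and $g$ is continuous on the compact interval $[-\eta,\eta]$, hence bounded there.

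The core step is a pointwise comparison. Since $|U|,|U'|\le\eta$ a.s., both points lie in $[-\eta,\eta]$, and $g$ is differentiable on this interval. The mean value theorem gives a $\xi$ between $U$ and $U'$ with
\[
  g(U)-g(U') = g'(\xi)\,(U-U'),
\]
and the two sides agree trivially when $U=U'$. The hypothesis $0<\alpha\le g'(\xi)\le\beta$ then yields, almost surely,
\[
  \alpha^2 (U-U')^2 \;\le\; (g(U)-g(U'))^2 \;\le\; \beta^2 (U-U')^2 .
\]
Positivity of $\alpha$ is what allows squaring the lower bound. Taking expectations and multiplying by $\tfrac12$ gives $\alpha^2\Var(U)\le\Var(g(U))\le\beta^2\Var(U)$.

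There is no real obstacle. The one point needing care is that the mean value point $\xi$ depends on $(U,U')$, so it must not be treated as a random variable whose measurability matters. This is handled by using only the deterministic inequality $\alpha|u-u'|\le|g(u)-g(u')|\le\beta|u-u'|$ for all $u,u'\in[-\eta,\eta]$, and then taking expectations of that bound.

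In the paper's later application, $g(u)=\log(1+u)$ on $[-\eta,\eta]$ with $\eta<1/2$. There the constants will be $\alpha=1/(1+\eta)$ and $\beta=1/(1-\eta)$, and the lemma will be applied column-wise to compare $\sigma_j^2$ with the variance of $\delta_{ij}$ under the posterior weights $r_{ij}$.
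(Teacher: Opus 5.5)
Your proof is correct and takes the same route as the paper: the independent-copy identity $\Var(Z)=\tfrac12\EX[(Z-Z')^2]$ combined with the mean-value-theorem bound $\alpha|u-u'|\le|g(u)-g(u')|\le\beta|u-u'|$, squared and integrated. Your remarks on finite second moments, and on sidestepping measurability of $\xi$ by using the deterministic two-sided inequality, make explicit what the paper leaves implicit.
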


\begin{proof}
For an independent copy $U'$:
$|g(U)-g(U')| \leq \beta|U-U'|$ a.s., so squaring and
using $\Var(Z) = \tfrac{1}{2}\EX[(Z-Z')^2]$ gives the upper bound.
The lower bound is analogous with $\alpha$.
\end{proof}

\begin{theorem}[Two-sided dispersion equivalence]
\label{thm:lc}
Suppose $|\delta_{ij}| \leq \eta < 1/2$ for all $i,j$.
Then:
\begin{equation}
  c_-(\eta)\,\chi^2(X;Y)
  \;\leq\;
  V(X;Y)
  \;\leq\;
  c_+(\eta)\,\chi^2(X;Y) + \frac{\eta^4}{4},
  \label{eq:lc_bound}
\end{equation}
where
\(
  c_-(\eta) = \frac{1-\eta-\eta^2}{(1+\eta)^2},
  \,\,
  c_+(\eta) = \frac{1+\eta}{(1-\eta)^2}.
\)
Both $c_\pm(\eta) \to 1$ as $\eta \to 0$.
\end{theorem}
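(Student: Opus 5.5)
The plan is to use the law of total variance, conditioning on the output, to split $V(X;Y)$ into a within-column part and a between-column part:
\[
V(X;Y)=\sum_j s_j\sigma_j^2+\sum_j s_j(\mu_j-I)^2,
\]
where $\sigma_j^2=\Var_{r_j}(\log(1+\delta_{ij}))$ and $I=\sum_j s_j\mu_j$ by Proposition~\ref{thm:decomp}. The first sum will carry the $\chi^2$ comparison. The second is nonnegative and will be discarded in the lower bound and bounded crudely in the upper bound.

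\textbf{Within-column term.} I would apply Lemma~\ref{lem:varcomp} with $g(u)=\log(1+u)$. Its derivative $g'(u)=1/(1+u)$ lies in $[1/(1+\eta),1/(1-\eta)]$ on $[-\eta,\eta]$. This gives
\[
\frac{\Var_{r_j}(\delta_{\cdot j})}{(1+\eta)^2}\le\sigma_j^2\le\frac{\Var_{r_j}(\delta_{\cdot j})}{(1-\eta)^2}.
\]
It remains to express the variance of $\delta$ under the posterior $r_{ij}=\pi_i(1+\delta_{ij})$ through $\bpi$-moments. Since $m_{1,j}=0$, we have $\EX_{r_j}[\delta]=m_{2,j}$ and $\EX_{r_j}[\delta^2]=m_{2,j}+m_{3,j}$. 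Hence
\[
\Var_{r_j}(\delta)=m_{2,j}+m_{3,j}-m_{2,j}^2 .
\]
Using $|m_{3,j}|\le\eta m_{2,j}$ and $m_{2,j}\le\eta^2$, this yields the two-sided bound
\[
(1-\eta-\eta^2)\,m_{2,j}\le\Var_{r_j}(\delta)\le(1+\eta)\,m_{2,j}.
\]
Multiplying by $s_j$ and summing, with $\chi^2=\sum_j s_jm_{2,j}$, gives
\[
c_-(\eta)\chi^2\le\sum_j s_j\sigma_j^2\le c_+(\eta)\chi^2 .
\]
Dropping the nonnegative between-column term then gives the lower bound in \eqref{eq:lc_bound}. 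The hypothesis $\eta<1/2$ guarantees $1-\eta-\eta^2>0$, so $c_->0$.

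\textbf{Between-column term (main obstacle).} The task is to show $\sum_j s_j(\mu_j-I)^2\le\eta^4/4$. This is a variance of the random variable $\mu_Y$, so it is at most $\tfrac14(\max_j\mu_j-\min_j\mu_j)^2$ by Popoviciu's inequality. Since $\mu_j\ge0$, that range is at most $\max_j\mu_j$. I would then bound each $\mu_j$ by $\eta^2$, for instance via $\mu_j=\JG_j/s_j\le V_j/s_j^2=m_{2,j}\le\eta^2$ using Proposition~\ref{thm:ub} applied column-wise. This gives $\tfrac14(\eta^2)^2=\eta^4/4$.

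If a sharper constant were needed, I would instead use the expansion from Theorem~\ref{thm:mi_expansion}, namely $\mu_j\approx\tfrac12m_{2,j}$. I expect this step to be the delicate one: I need to confirm that Popoviciu together with the crude bound $\mu_j\le m_{2,j}\le\eta^2$ really gives exactly $\eta^4/4$ without extra factors. I also need to confirm that the additive term cannot instead be absorbed into $c_+$, since $\sum_j s_j\mu_j^2$ is only $O(\eta^2)\chi^2$.

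Finally, $c_\pm(\eta)\to1$ is immediate from the explicit formulas.
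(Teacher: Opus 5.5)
Your proposal is correct and follows the paper's proof essentially step for step: the law-of-total-variance split (Theorem~\ref{thm:dispersion}), Lemma~\ref{lem:varcomp} with $g(u)=\log(1+u)$, the posterior bounds $(1-\eta-\eta^2)m_{2,j}\le\Var_{r_j}(\delta)\le(1+\eta)m_{2,j}$, and Popoviciu's inequality with $\mu_j\in[0,m_{2,j}]\subseteq[0,\eta^2]$. Your exact identity $\Var_{r_j}(\delta)=m_{2,j}+m_{3,j}-m_{2,j}^2$ with $|m_{3,j}|\le\eta m_{2,j}$ is equivalent to the paper's use of $\pi_i(1-\eta)\le r_{ij}\le\pi_i(1+\eta)$, and the step you flagged does go through: Popoviciu on $[0,\eta^2]$ gives exactly $\eta^4/4$ with no extra factors.
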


\begin{proof}
\textbf{Step~1 (Within-column term).}
Apply Lemma~\ref{lem:varcomp} to $g(u) = \log(1+u)$ on
$[-\eta,\eta]$.
Since $g'(u) = 1/(1+u) \in [1/(1+\eta),\,1/(1-\eta)]$,
\begin{equation}
  \frac{\Var_{r_j}(\Delta_j)}{(1+\eta)^2}
  \leq \sigma_j^2 \leq
  \frac{\Var_{r_j}(\Delta_j)}{(1-\eta)^2},
  \label{eq:step1}
\end{equation}
where $\Delta_j$ takes value $\delta_{ij}$ under the posterior
$r_{ij} = \pi_i(1+\delta_{ij})$.

\textbf{Step~2 (Posterior variance of $\Delta_j$).}
The posterior mean of $\Delta_j$ is
\(
  \EX_{r_j}[\Delta_j]
  = \textstyle\sum_i \pi_i(1+\delta_{ij})\delta_{ij}
  = m_{2,j},
\)
using $m_{1,j}=0$.
For the lower bound, since $r_{ij} \geq \pi_i(1-\eta)$:
$\EX_{r_j}[\Delta_j^2] \geq (1-\eta)m_{2,j}$, so
\begin{equation}
  \Var_{r_j}(\Delta_j) \geq (1-\eta)m_{2,j} - m_{2,j}^2
  \geq (1-\eta-\eta^2)m_{2,j}.
  \label{eq:varlb}
\end{equation}
For the upper bound, since $r_{ij}\leq\pi_i(1+\eta)$,
\begin{equation}
  \Var_{r_j}(\Delta_j) \leq \EX_{r_j}[\Delta_j^2]
  \leq (1+\eta)\,m_{2,j}.
  \label{eq:popcalc}
\end{equation}

\textbf{Step~3 (Within-column sum).}
Combining Steps~1 and~2 and summing over $j$:
\begin{equation}
  c_-(\eta)\,\chi^2
  \leq \textstyle\sum_j s_j\sigma_j^2
  \leq \frac{1+\eta}{(1-\eta)^2}\,\chi^2.
  \label{eq:step3}
\end{equation}

\textbf{Step~4 (Between-column term).}
Since $|\delta_{ij}| \leq \eta$, the column-wise bound
$\D_{\mathrm{KL}}(P_{X|Y=y_j}\|P_X) \leq \chi^2_j
= m_{2,j}$ gives $\mu_j \in [0,\eta^2]$ for every $j$.
Applying Popoviciu's inequality to the $Y$-marginal
with $\mu_j \in [0,\eta^2]$:
\begin{equation}
  \textstyle\sum_j s_j(\mu_j - I)^2
  \leq \frac{(\eta^2)^2}{4} = \frac{\eta^4}{4}.
  \label{eq:between_ub}
\end{equation}
This bound is unconditional: it does not require any
comparison between $\eta^2$ and $\chi^2$.

\textbf{Step~5 (Assembly).}
By Theorem~\ref{thm:dispersion} (proved independently in
Section~\ref{sec:dispersion}):
$V = \sum_j s_j\sigma_j^2 + \sum_j s_j(\mu_j-I)^2$.
The lower bound follows from $\sum_j s_j(\mu_j-I)^2 \geq 0$
and~\eqref{eq:step3}.
The upper bound follows from~\eqref{eq:step3}
and~\eqref{eq:between_ub}.
\end{proof}

\section{Dispersion Decomposition, Duality, and Binary Asymmetric Channel (BAC) Phase Diagram}
\label{sec:dispersion}

\subsection{Law-of-Total-Variance Decomposition}

\begin{theorem}[Column-wise dispersion decomposition]
\label{thm:dispersion}
\begin{equation}
  V(X;Y)
  = \textstyle\sum_{j} s_j\,\sigma_j^2
  + \textstyle\sum_{j} s_j\,(\mu_j - I(X;Y))^2.
  \label{eq:disp_decomp}
\end{equation}
We call the first term \(\textstyle\sum_{j} s_j\,\sigma_j^2\) within-column and the second term, which is \(\textstyle\sum_{j} s_j\,(\mu_j - I(X;Y))^2\) between-columns.
\end{theorem}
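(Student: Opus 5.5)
This is the law of total variance applied to the information density $\imath(X;Y)$, conditioned on the output $Y$. I will work under the joint law $P_{XY}(x_i,y_j)=\pi_i t_{ij}$, restricted to pairs with $t_{ij}>0$ so that $\imath$ is finite, and to outputs with $s_j>0$. Zero-probability pairs and columns contribute nothing to any of the sums involved.

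\textbf{Step 1: rewrite the joint law.} Factor the joint law through the output marginal and the posterior:
\[
P_{XY}(x_i,y_j)=s_j\,r_{ij}, \qquad r_{ij}=\frac{\pi_i t_{ij}}{s_j}.
\]
For each $j$ with $s_j>0$, the posterior weights $r_{ij}$ are nonnegative and sum to one over $i$, since $\sum_i \pi_i t_{ij}=s_j$. Hence conditioning on $Y=y_j$ means taking expectations with respect to $r_{\cdot j}$. By Definition~\ref{def:density}, the conditional mean and variance of $\imath(X;Y)$ given $Y=y_j$ are exactly $\mu_j$ and $\sigma_j^2$.

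\textbf{Step 2: identify the overall mean.} By the tower property,
\[
\EX[\imath(X;Y)]=\sum_j s_j\mu_j .
\]
By Proposition~\ref{thm:decomp} together with Remark~\ref{rem:kl}, this sum equals $I(X;Y)$.

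\textbf{Step 3: apply the law of total variance.} We have
\[
\Var(\imath)=\EX[\Var(\imath\mid Y)]+\Var(\EX[\imath\mid Y]).
\]
The first term is $\sum_j s_j\sigma_j^2$. The second term is the variance of the random variable taking value $\mu_j$ with probability $s_j$, whose mean is $I$ by Step 2; it therefore equals $\sum_j s_j(\mu_j-I)^2$. With $V(X;Y)=\Var(\imath(X;Y))$ from Definition~\ref{def:dispersion}, this gives \eqref{eq:disp_decomp}.

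\textbf{Direct verification.} To avoid quoting the law of total variance, I would check the identity by expansion. Write
\[
\EX[\imath^2]=\sum_j s_j\sum_i r_{ij}\,\imath(x_i;y_j)^2=\sum_j s_j(\sigma_j^2+\mu_j^2).
\]
Subtracting $I^2$ and using $\sum_j s_j\mu_j=I$ and $\sum_j s_j=1$ gives
\[
\sum_j s_j\mu_j^2-I^2=\sum_j s_j(\mu_j-I)^2 ,
\]
which yields the same result.

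The only point needing care, and the one I expect to be the main obstacle, is the treatment of zero entries. Pairs with $t_{ij}=0$ have zero joint probability, so $\imath$ need not be defined there; I would define the conditional moments as sums over the support of $r_{\cdot j}$ so that every term is finite. Under the hypotheses of Theorem~\ref{thm:lc} this issue does not arise, because $\eta<1$ forces $t_{ij}>0$ for all $i,j$.
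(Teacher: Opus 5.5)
Your proposal is correct and is the same argument as the paper's proof: the law of total variance for $\imath(X;Y)$ conditioned on $Y$, using $\EX[\imath\mid Y=y_j]=\mu_j$, $\Var(\imath\mid Y=y_j)=\sigma_j^2$, and overall mean $I(X;Y)$. Your direct expansion check and your convention of summing only over the support when some $t_{ij}=0$ are sound details that the paper leaves implicit.
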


\begin{proof}
Law of total variance applied to $\imath(X;Y)$:
$V(X;Y) = \EX_Y[\Var(\imath\mid Y)] + \Var_Y(\EX[\imath\mid Y])
= \sum_j s_j\sigma_j^2 + \sum_j s_j(\mu_j - I)^2$.
\end{proof}

\subsection{BSC/BEC Structural Duality}

\begin{corollary}[BSC dispersion is entirely within-column]
\label{cor:bsc}
For $\mathrm{BSC}(p)$, $p \in (0,1/2)$, with uniform input:
\(
  I = \ln 2 - h_b(p), \,
  V = p(1{-}p)\!\left(\ln\tfrac{1-p}{p}\right)^{\!2}, \,
  \chi^2 = (1{-}2p)^2.
\)
The between-column term in~\eqref{eq:disp_decomp} vanishes:
$V(X;Y) = \sum_j s_j \sigma_j^2$.
\end{corollary}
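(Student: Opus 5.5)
Our plan is to compute everything explicitly for $\mathrm{BSC}(p)$ with uniform input $\bpi=(1/2,1/2)$, where the transition matrix has rows $(1-p,p)$ and $(p,1-p)$.

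\textbf{Output marginal and $I$.} First, $s_1=s_2=1/2$. By Proposition~\ref{thm:decomp}, $I=H(Y)-H(Y|X)=\ln 2-h_b(p)$.

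\textbf{$\chi^2$ via Definition~\ref{def:chi2}.} We have
\[
V_j=\tfrac12\bigl((1-p-\tfrac12)^2+(p-\tfrac12)^2\bigr)=(\tfrac12-p)^2 ,
\]
so
\[
\chi^2=\sum_j V_j/s_j=2\cdot 2(\tfrac12-p)^2=(1-2p)^2 .
\]

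\textbf{Column statistics.} For the column statistics of Definition~\ref{def:density} we use the posterior weights $r_{ij}=\pi_i t_{ij}/s_j=t_{ij}$. In column $j=1$ the information density takes the value $\ln(2(1-p))$ with posterior weight $1-p$, and $\ln(2p)$ with weight $p$. Hence
\[
\mu_1=(1-p)\ln(2(1-p))+p\ln(2p)=\ln 2-h_b(p)=I .
\]
The conditional variance of a two-point variable is $\sigma_1^2=p(1-p)\bigl(\ln\tfrac{1-p}{p}\bigr)^2$, since the two values differ by $\ln\frac{1-p}{p}$.

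\textbf{Symmetry and the between-column term.} Column $2$ is the mirror image of column $1$ under swapping the inputs. The posterior multiset of values and weights is therefore identical, so $\mu_2=\mu_1=I$ and $\sigma_2^2=\sigma_1^2$. Consequently every $\mu_j-I=0$, and the between-column term of Theorem~\ref{thm:dispersion} vanishes.

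\textbf{Dispersion formula.} Theorem~\ref{thm:dispersion} then gives
\[
V=\sum_j s_j\sigma_j^2=\sigma_1^2=p(1-p)\bigl(\ln\tfrac{1-p}{p}\bigr)^2 .
\]

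There is no real obstacle. The only step needing care is verifying that $\mu_j=I$ for each $j$. This follows either from the permutation symmetry argument above, or from the general fact that $\sum_j s_j\mu_j=I$ combined with $\mu_1=\mu_2$.
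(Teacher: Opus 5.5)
Your proof is correct and takes essentially the same approach as the paper: by symmetry under uniform input, $\mu_1=\mu_2=I$, so the between-column term vanishes. The two-point posterior with weights $(1-p,p)$ on $\ln(2(1-p))$ and $\ln(2p)$ gives $\sigma_j^2=p(1-p)\left(\ln\frac{1-p}{p}\right)^2$, and Theorem~\ref{thm:dispersion} turns this into $V$. You also check the stated formulas for $I$ and $\chi^2$ explicitly, which the paper's proof leaves implicit.
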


\begin{proof}
By symmetry with uniform input, $\mu_0 = \mu_1 = I$,
so $\sum_j s_j(\mu_j-I)^2 = 0$.
The posterior at each $y_j$ assigns weight $1-p$ to
$\imath = \ln(2(1-p))$ and weight $p$ to $\imath = \ln(2p)$,
giving $\sigma_j^2 = p(1-p)(\ln((1-p)/p))^2$.
With $s_0 = s_1 = 1/2$: $V = p(1-p)(\ln((1-p)/p))^2$.
\end{proof}

\begin{corollary}[BEC dispersion is entirely between-column]
\label{cor:bec}
For $\mathrm{BEC}(\varepsilon)$, $\varepsilon \in (0,1)$,
with uniform input:
\(
  I = (1{-}\varepsilon)\ln 2, \,
  V = \varepsilon(1{-}\varepsilon)(\ln 2)^2, \,
  \chi^2 = 1{-}\varepsilon.
\)
The within-column term in~\eqref{eq:disp_decomp} vanishes:
$V(X;Y) = \sum_j s_j(\mu_j - I)^2$.
\end{corollary}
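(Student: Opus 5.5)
We compute everything directly for $\mathrm{BEC}(\varepsilon)$ with uniform input. The channel has outputs $\{0,e,1\}$ with $t_{00}=t_{11}=1-\varepsilon$, $t_{0e}=t_{1e}=\varepsilon$, and $t_{01}=t_{10}=0$. The output marginal is therefore $s_0=s_1=(1-\varepsilon)/2$ and $s_e=\varepsilon$.

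\textbf{Information density.} On the support ($t_{ij}>0$) there are only three cases:
\begin{itemize}
\item for $(x,y)=(0,0)$ or $(1,1)$, $\imath=\ln\bigl((1-\varepsilon)/((1-\varepsilon)/2)\bigr)=\ln 2$;
\item for $y=e$, $\imath=\ln(\varepsilon/\varepsilon)=0$.
\end{itemize}
Pairs with $t_{ij}=0$ have probability zero and are excluded, following the convention of Definition~\ref{def:density}.

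\textbf{Vanishing of the within-column term.} Given $y=0$, the posterior puts all mass on $x=0$, so $\imath$ is the constant $\ln 2$ and $\sigma_0^2=0$; likewise $\sigma_1^2=0$. Given $y=e$, the posterior is uniform but $\imath\equiv 0$ on both inputs, so $\sigma_e^2=0$. Hence $\sum_j s_j\sigma_j^2=0$. By Theorem~\ref{thm:dispersion}, $V(X;Y)=\sum_j s_j(\mu_j-I)^2$, which is the claim.

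\textbf{Closed forms.} The column means are $\mu_0=\mu_1=\ln 2$ and $\mu_e=0$. By Proposition~\ref{thm:decomp},
\[
I=\sum_j s_j\mu_j=(1-\varepsilon)\ln 2 .
\]
So $\imath(X;Y)$ is a scaled Bernoulli variable: it equals $\ln 2$ with probability $1-\varepsilon$ and $0$ otherwise. Its variance is $V=\varepsilon(1-\varepsilon)(\ln 2)^2$. Equivalently, this is the between-column sum
\[
(1-\varepsilon)\bigl(\varepsilon\ln 2\bigr)^2+\varepsilon\bigl((1-\varepsilon)\ln 2\bigr)^2 .
\]

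\textbf{Chi-squared value.} By Definition~\ref{def:chi2}, $\chi^2=\sum_{i,j}\pi_i t_{ij}^2/s_j-1$. The columns contribute as follows:
\begin{itemize}
\item columns $0$ and $1$ each give $\tfrac12(1-\varepsilon)^2/((1-\varepsilon)/2)=1-\varepsilon$;
\item column $e$ gives $2\cdot\tfrac12\varepsilon^2/\varepsilon=\varepsilon$.
\end{itemize}
The total is $2(1-\varepsilon)+\varepsilon-1=1-\varepsilon$.

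The only delicate point is that $\bpi$-statistics are taken over all inputs, whereas the conditional statistics $\mu_j,\sigma_j^2$ are taken under the posterior $r_{ij}$. Zero-probability pairs therefore receive zero posterior weight, so $\imath=\ln 0$ never enters the computation.
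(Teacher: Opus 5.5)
Your proof is correct and follows essentially the same route as the paper: point-mass posteriors on the non-erasure outputs and an identically zero information density on the erasure column make every $\sigma_j^2$ vanish, so Theorem~\ref{thm:dispersion} reduces $V$ to the between-column sum with $\mu_0=\mu_1=\ln 2$ and $\mu_e=0$. Your explicit checks of $I$, of $V$ as a Bernoulli variance, and of $\chi^2=1-\varepsilon$ fill in values that the paper states without derivation.
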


\begin{proof}
For $y \in \{0,1\}$: the posterior is a point mass, so $\sigma_y^2 = 0$.
For the erasure output: $t_{i,e} = \varepsilon = s_e$ for both $i$,
so $\imath(x_i;y_e) = 0$ and $\sigma_e^2 = 0$.
The within-column term therefore vanishes.
Column means: $\mu_0 = \mu_1 = \ln 2$, $\mu_e = 0$.
$V = \Var_Y(\mu_Y) = (1-\varepsilon)(\ln 2 - I)^2
+ \varepsilon(0-I)^2 = \varepsilon(1-\varepsilon)(\ln 2)^2$.
\end{proof}

\begin{remark}[Structural duality]
\label{rem:structure}
BSC and BEC occupy opposite extremes of the within/between
decomposition.
In the BSC, symmetry forces all column KL divergences to equal
$I$, eliminating between-column spread entirely; all dispersion
is local randomness within each column.
In the BEC, non-erasure outputs are noiseless so each posterior
is a point mass, eliminating within-column randomness entirely;
all dispersion comes from the contrast between the informative
columns ($\mu_j = \ln 2$) and the erasure column ($\mu_e = 0$).
This complementarity is a direct consequence of the geometry
of~\eqref{eq:disp_decomp}; we are not aware of it having been
stated in this explicit form previously (cf.~\cite{Melbourne2022}).
\end{remark}

\subsection{BAC Dispersion Phase Diagram}
\label{ssec:bac}

\begin{theorem}[$BAC(p,b)$ dispersion components and phase boundary]
\label{thm:bac}
For the binary asymmetric channel
\begin{equation}
  T = \begin{bmatrix}1-p & p \\ b & 1-b\end{bmatrix}
\end{equation}
with uniform input $\bpi = (1/2, 1/2)$, define
$\bar p = (1-p+b)/2$ and $\bar b = (p+1-b)/2 = 1-\bar p$.
Then:
\(
  s_0 = \bar b, \, s_1 = \bar p, 
  \)
  \(
  I   = h_b(\bar p) - \tfrac{1}{2}h_b(p) - \tfrac{1}{2}h_b(b),
  \)
  \(
  \chi^2 = \frac{(1-p-b)^2}{4\bar p\,\bar b},
  \)
  \(
  \mu_0 = \D_{\mathrm{KL}}(P_{X|Y=0}\|P_X), \)
  \(
  \mu_1 = \D_{\mathrm{KL}}(P_{X|Y=1}\|P_X),
  \)
where
\(
  \mu_0 = \frac{(1-p)\ln\frac{1-p}{\bar b}
            + b\ln\frac{b}{\bar b}}{1},
           \quad
  \mu_1 = \frac{p\ln\frac{p}{\bar p}
            + (1-b)\ln\frac{1-b}{\bar p}}{1}.
\)
The within-column and between-column components are:
\(
  W(p,b) \triangleq \textstyle\sum_j s_j \sigma_j^2
  = \bar b\,\sigma_0^2 + \bar p\,\sigma_1^2,
  \)
  \(
  B(p,b) \triangleq \textstyle\sum_j s_j(\mu_j-I)^2
  = \bar b(\mu_0-I)^2 + \bar p(\mu_1-I)^2,
\)
with $V(X;Y) = W(p,b) + B(p,b)$ and
\(
  \sigma_0^2 = \frac{(1-p)b}{(1-p+b)^2}
    \left(\ln\frac{1-p}{b}\right)^{\!2},
  \)
  \(
  \sigma_1^2 = \frac{p(1-b)}{(p+1-b)^2}
    \left(\ln\frac{1-b}{p}\right)^{\!2}.
\)
The \emph{phase boundary} separating within-column-dominant
($W > B$) from between-column-dominant ($B > W$) regimes
is the curve
\begin{equation}
  \mathcal{C} \triangleq
  \{(p,b) \in (0,1)^2 : W(p,b) = B(p,b)\}.
  \label{eq:phase_boundary}
\end{equation}
The BSC corresponds to the diagonal $b = p$, where
$B \equiv 0$; the BEC corresponds to $p = 0, b = \varepsilon$
(or $p=\varepsilon, b=0$), where $W \equiv 0$.
\end{theorem}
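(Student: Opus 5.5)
The proof is a direct specialization of the general column-wise framework to $N=M=2$. No new inequality is needed. The decomposition $V=W+B$ is exactly Theorem~\ref{thm:dispersion} applied to this channel, and the identification of $\mu_j$ with a KL divergence is Remark~\ref{rem:kl}. What remains is to evaluate $s_j$, $I$, $\chi^2$, $\mu_j$ and $\sigma_j^2$ explicitly. The phase boundary $\mathcal{C}$ in~\eqref{eq:phase_boundary} is a definition, so it needs no proof beyond checking the two endpoint claims (BSC and BEC).

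\textbf{Marginals, $I$ and $\chi^2$.} With $\bpi=(1/2,1/2)$, the column sums are $\tfrac12(1-p+b)$ and $\tfrac12(p+1-b)$, i.e.\ $\bar p$ and $\bar b$. I will first fix which output label carries which mass. Under the literal row/column indexing of $T$, column $0$ has mass $\bar p$. The stated formulas for $\sigma_0^2$ and $\mu_0$ are built from the entries $1-p$ and $b$, which are exactly the entries of column $0$. So I will keep the convention ``column $0$ = entries $(1-p,b)$'' and record its mass consistently throughout, rather than mixing labels.

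$I$ then follows from Proposition~\ref{thm:decomp}, or directly as $H(Y)-H(Y|X)=h_b(\bar p)-\tfrac12 h_b(p)-\tfrac12 h_b(b)$. $\chi^2$ follows from $\sum_j V_j/s_j$: each $V_j=\tfrac14(1-p-b)^2$, and $1/\bar p+1/\bar b=1/(\bar p\bar b)$, which gives $(1-p-b)^2/(4\bar p\bar b)$.

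\textbf{Column statistics.} For the column with entries $(1-p,b)$, the posterior weights are $r_{1}=(1-p)/(1-p+b)$ and $r_{2}=b/(1-p+b)$. The information density takes the values $\ln((1-p)/s)$ and $\ln(b/s)$, where $s$ is the column mass. The posterior mean of the density gives $\mu$, which I will write with the correct posterior normalization $1/(1-p+b)$. Equivalently, $s\,\mu$ equals $\tfrac12[(1-p)\ln\frac{1-p}{s}+b\ln\frac{b}{s}]$, and I will check the displayed expression against this weighting.

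The variance of a two-point law with weights $w,1-w$ and value gap $d$ is $w(1-w)d^2$. Here the gap is $\ln((1-p)/b)$, independent of $s$, so
\[
\sigma^2=\frac{(1-p)b}{(1-p+b)^2}\Bigl(\ln\frac{1-p}{b}\Bigr)^{2},
\]
which matches the stated $\sigma_0^2$. The same computation for the other column gives $\sigma_1^2$. Substituting into Theorem~\ref{thm:dispersion} yields $W$ and $B$ as stated.

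\textbf{Endpoints and main obstacle.} On $b=p$, the channel is $\mathrm{BSC}(p)$ and $B\equiv0$ by Corollary~\ref{cor:bsc}. At $p=0$, $b=\varepsilon$, one column is noiseless, so its posterior is a point mass and its $\sigma^2$ vanishes. The other column's $\sigma^2$ vanishes only if its posterior is also degenerate. That holds for the genuine BEC, which has three outputs, but the binary specialization is a Z-channel. So I expect the main obstacle to be reconciling the claim $W\equiv0$ there: for a $2\times2$ matrix with $p=0$, $\sigma^2$ of the column $(p,1-b)$ is degenerate because $p=0$ forces a point mass. I will verify this via the factor $p(1-b)$ in $\sigma_1^2$, which tends to $0$ as $p\to0$. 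The other column $(1-p,b)=(1,\varepsilon)$ does not degenerate, so $W\equiv0$ must be interpreted as the limit statement, or read with the corresponding relabeling.
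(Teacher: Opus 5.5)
Your route is the paper's route: specialize to $N=M=2$, compute the two-point posterior variance, apply Theorem~\ref{thm:dispersion}, then the BSC/BEC corollaries. Your mass bookkeeping is more careful than the paper's proof, which treats $\bar b$ as the mass of column~$0$. Its posterior weights $(1-p)/(2\bar b)$, $b/(2\bar b)$ sum to $\bar p/\bar b$, and it uses $2\bar b=1-p+b$, which contradicts the definition of $\bar b$.

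The genuine gap is the BEC clause. You correctly see that $(p,b)=(0,\varepsilon)$ is a Z-channel whose column $(1,\varepsilon)$ is non-degenerate. You then try to rescue $W\equiv 0$ as a limit statement or by relabeling, and neither works. For $\varepsilon\in(0,1)$,
\[
  W(0,\varepsilon)=\tfrac{1+\varepsilon}{2}\,\sigma_0^2
  =\frac{\varepsilon(\ln\varepsilon)^2}{2(1+\varepsilon)}>0 .
\]
The limit $p\to 0$ just returns this value, since only $\sigma_1^2\to 0$. The relabeled point $(\varepsilon,0)$ gives the same value with the columns exchanged. Using the statement's weights changes the prefactor but not the sign. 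The BEC of Corollary~\ref{cor:bec} has three outputs and is not a BAC, so this clause is false and no argument can establish it. The paper's ``at the appropriate limit'' is equally unjustified. What is true is only that on the axes $p=0$ or $b=0$ one of the two within-column terms vanishes, and you should say that instead.

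A second, smaller problem is your claim that substitution ``yields $W$ and $B$ as stated.'' With your assignment $s_0=\bar p$, $s_1=\bar b$ you get $W=\bar p\,\sigma_0^2+\bar b\,\sigma_1^2$ and $B=\bar p(\mu_0-I)^2+\bar b(\mu_1-I)^2$, with
\[
  \mu_0=\frac{(1-p)\ln\frac{1-p}{\bar p}+b\ln\frac{b}{\bar p}}{1-p+b},
  \qquad
  \mu_1=\frac{p\ln\frac{p}{\bar b}+(1-b)\ln\frac{1-b}{\bar b}}{p+1-b}.
\]
The displayed $s_j$, $W$, $B$ instead carry the swapped weights. The displayed $\mu_j$ use the other column's marginal and have denominator $1$ instead of $2s_j$. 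These disagree with your results whenever $p\neq b$.

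So rather than promising to ``check'' the displayed expressions, state explicitly that they hold only after interchanging $\bar p$ and $\bar b$ in $s_j,\mu_j,W,B$ and correcting the $\mu_j$ normalizations. Your derivations of $I$, $\chi^2$, $\sigma_0^2$, $\sigma_1^2$ and the BSC claim are correct as they stand.
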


\begin{proof}
The formulas for $s_j$, $I$, and $\chi^2$ follow by direct
substitution into Definitions~\ref{def:col} and~\ref{def:chi2}
and the mutual information formula for the BAC.
For $\sigma_j^2$: the posterior at $Y=0$ places mass
$(1-p)/(2\bar b)$ on $x_0$ and $b/(2\bar b)$ on $x_1$,
giving information densities $\ln((1-p)/\bar b)$ and
$\ln(b/\bar b)$ respectively.
Thus $\sigma_0^2$ is the variance of a two-point distribution
with those values and weights $w_1 = (1-p)/(2\bar b)$,
$w_2 = b/(2\bar b)$.
For a two-point distribution, the variance equals
$w_1 w_2 (\text{gap})^2$, where the gap is
$\ln((1-p)/\bar b) - \ln(b/\bar b) = \ln((1-p)/b)$.
Since $w_1 w_2 = (1-p)b/(4\bar b^2)$ and
$2\bar b = 1-p+b$, this gives
$\sigma_0^2 = (1-p)b/(1-p+b)^2 \cdot (\ln((1-p)/b))^2$.
The formula for $\sigma_1^2$ is analogous.
The within-column and between-column formulas then follow from
Theorem~\ref{thm:dispersion}.
The phase boundary~\eqref{eq:phase_boundary} is defined
implicitly; BSC ($b=p$) and BEC ($b=0$ or $p=0$ at the
appropriate limit) are the pure cases by
Corollaries~\ref{cor:bsc} and~\ref{cor:bec}.
\end{proof}

\begin{remark}[Computing the phase boundary]
\label{rem:phase}
The curve $\mathcal{C}$ has no closed-form expression in
$(p,b)$ in general, but is easily traced numerically since
$W$ and $B$ are smooth functions of $(p,b)$.
The partition of the $(p,b)$-square can be computed
numerically by evaluating $W(p,b) - B(p,b)$ on a fine grid.
Near the BSC diagonal ($b \approx p$), the between-column
component $B$ is small (because the two $\mu_j$ values are
nearly equal), so the within-column regime dominates.
Near the axes ($b \approx 0$ or $p \approx 0$), one output is
almost noiseless, forcing $\sigma_j^2 \approx 0$ for that
column, and the between-column term dominates.
\end{remark}

\section{Channel-Wise Robust Design Rate}
\label{sec:coding}

\begin{table*}[t]
\centering
\caption{Certified arithmetic bounds vs.\ exact values for BSC and BAC
         channels with uniform input. All channels satisfy $\eta < 0.5$,
         so Theorems~1--2 apply. Quantities in nats.}
\label{tab:bounds}
\setlength{\tabcolsep}{5pt}
\begin{tabular}{lcccccccc}
\toprule
Channel & $\eta$ & $I$ exact & $I_L$ & $\chi^2\!(=\!I_U)$
        & $V$ exact & $V_L$ & $V_U$ \\
\midrule
BSC(0.30) & 0.4000 & 0.08228 & 0.05946 & 0.16000
          & 0.15076 & 0.03592 & 0.62862 \\
BSC(0.35) & 0.3000 & 0.04570 & 0.03853 & 0.09000
          & 0.08718 & 0.03249 & 0.24080 \\
BSC(0.40) & 0.2000 & 0.02014 & 0.01841 & 0.04000
          & 0.03946 & 0.02111 & 0.07540 \\
BSC(0.45) & 0.1000 & 0.00501 & 0.00482 & 0.01000
          & 0.00997 & 0.00736 & 0.01361 \\
BSC(0.48) & 0.0400 & 0.00080 & 0.00079 & 0.00160
          & 0.00160 & 0.00142 & 0.00181 \\
\midrule
BAC(0.30,\,0.35) & 0.3684 & 0.06274 & 0.04835 & 0.12281
                 & 0.11745 & 0.03252 & 0.42590 \\
BAC(0.35,\,0.40) & 0.2632 & 0.03167 & 0.02768 & 0.06266
                 & 0.06130 & 0.02622 & 0.14697 \\
BAC(0.40,\,0.45) & 0.1579 & 0.01132 & 0.01061 & 0.02256
                 & 0.02238 & 0.01375 & 0.03699 \\
BAC(0.33,\,0.41) & 0.2826 & 0.03443 & 0.02959 & 0.06804
                 & 0.06643 & 0.02637 & 0.17115 \\
BAC(0.38,\,0.46) & 0.1739 & 0.01294 & 0.01202 & 0.02576
                 & 0.02554 & 0.01488 & 0.04455 \\
$4{\times}3$& 0.38461 & 0.02021 & 0.01595 & 0.04165 & 0.04099 & 0.01015 & 0.15775\\
\bottomrule
\end{tabular}
\end{table*}

\subsection{Uncertainty Class and Certified Intervals}
\label{ssec:certified}

We work with the uncertainty class
\(
  \mathcal{T}(\eta)
  \triangleq
  \left\{
    T : \max_{i,j}\frac{|t_{ij}-s_j|}{s_j} \leq \eta
  \right\},
\)
where $s_j = \sum_i \pi_i t_{ij}$ is the output marginal of
$T$ itself under $\bpi$.

\begin{proposition}[Certified arithmetic intervals]
\label{prop:certified}
Let $T \in \mathcal{T}(\eta)$ for $\eta \in (0,1/2)$ and
let $\bpi$ be a fixed full-support input distribution.
Define
\(
  \chi^2 \triangleq \textstyle\sum_j V_j/s_j,
\)
\(
  I_L \triangleq \chi^2\!\left(
    \tfrac{1}{2} - \tfrac{\eta}{6}
    - \tfrac{\eta^2}{12(1-\eta)^3}\right),
    \)
    \(
  I_U \triangleq \chi^2,
  \)
\(
  V_L \triangleq c_-(\eta)\,\chi^2,
  \)
  \(
  V_U \triangleq c_+(\eta)\,\chi^2 + \tfrac{\eta^4}{4},
  \)
with $c_\pm(\eta)$ as in Theorem~\ref{thm:lc}.
Then
\begin{equation}
  I_L \leq I(X;Y;T) \leq I_U,
  \qquad
  V_L \leq V(X;Y;T) \leq V_U.
  \label{eq:certified_intervals}
\end{equation}
All four quantities require only arithmetic operations on $T$,
$\bpi$, and $\eta$; no logarithm evaluation is required.
\end{proposition}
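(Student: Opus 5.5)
The proposition is a repackaging of Theorems~\ref{thm:mi_expansion} and~\ref{thm:lc} together with Proposition~\ref{thm:ub}. The plan is to verify each of the four inequalities separately and then check the arithmetic claim.

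\textbf{The two bounds on $I$.} The upper bound $I(X;Y;T)\le I_U=\chi^2$ is exactly Proposition~\ref{thm:ub}; it holds for every DMC with full-support $\bpi$, so no hypothesis on $\eta$ is needed. For the lower bound, note that $T\in\mathcal{T}(\eta)$ with $\eta<1/2<1$ means $|\delta_{ij}|\le\eta<1$ for all $i,j$. Hence Theorem~\ref{thm:mi_expansion} applies whenever $\chi^2>0$. It gives
\[
I/\chi^2 \ge \tfrac12-\tfrac16|A_3|-|\varepsilon| \ge \tfrac12-\tfrac{\eta}{6}-\tfrac{\eta^2}{12(1-\eta)^3},
\]
using $|A_3|\le\eta$ and the stated bound on $\varepsilon$. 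Multiplying by $\chi^2>0$ yields $I\ge I_L$. The degenerate case $\chi^2=0$ is handled separately. Since $\chi^2=\sum_j V_j/s_j$, it forces $t_{ij}=s_j$ for all $i,j$, so $I=0=I_L=I_U$ by the equality case of Proposition~\ref{thm:ub}.

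\textbf{The two bounds on $V$.} Since $\eta<1/2$, the hypothesis of Theorem~\ref{thm:lc} holds verbatim, and~\eqref{eq:lc_bound} is literally $V_L\le V\le V_U$. One point to check is that Theorem~\ref{thm:lc} needs $t_{ij}>0$ for the information density to be defined. This is guaranteed here: $t_{ij}\ge(1-\eta)s_j>0$, and every output with $s_j>0$ is a column under consideration. In the degenerate case $\chi^2=0$, $V=0$ trivially lies in $[0,\eta^4/4]$.

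\textbf{Log-free computability.} Here I only inspect the formulas. $s_j$, $V_j$, and $\chi^2$ are sums and products of entries of $T$ and $\bpi$, with one division by $s_j$. The coefficients $\tfrac12-\tfrac\eta6-\tfrac{\eta^2}{12(1-\eta)^3}$, $c_\pm(\eta)$, and $\eta^4/4$ are rational functions of $\eta$. If $\eta$ is taken as the actual $\max_{i,j}|\delta_{ij}|$, it too is computed with divisions and absolute values only.

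\textbf{Main obstacle.} The only step needing care is the sign of the coefficient in $I_L$. I will not rely on it being positive: the inequality $I\ge I_L$ follows from multiplying by $\chi^2\ge0$ regardless, and if the coefficient were negative the bound would simply be trivial. Beyond that, the proof is bookkeeping: confirming the standing hypotheses ($\eta<1$ for C1, $\eta<1/2$ for C2, full support of $\bpi$) are met by membership in $\mathcal{T}(\eta)$.
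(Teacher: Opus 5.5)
Your proposal is correct and matches the paper's proof. The paper likewise invokes Proposition~\ref{thm:ub} for $I\le I_U$, Theorem~\ref{thm:mi_expansion} with $|A_3|\le\eta$ and $|\varepsilon|\le\eta^2/[12(1-\eta)^3]$ for $I\ge I_L$, and Theorem~\ref{thm:lc} for both bounds on $V$; your separate handling of the degenerate case $\chi^2=0$ (where the ratio in Theorem~\ref{thm:mi_expansion} is undefined) and your check that $t_{ij}\ge(1-\eta)s_j>0$ are sensible details the paper leaves implicit.
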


\begin{proof}
The bounds on $I$ follow from Theorem~\ref{thm:mi_expansion}
with $|A_3| \leq \eta$ and $|\varepsilon| \leq \eta^2/[12(1-\eta)^3]$.
The upper bound $I \leq \chi^2$ is Proposition~\ref{thm:ub}.
The bounds on $V$ follow from Theorem~\ref{thm:lc}.
\end{proof}

\subsection{Certified Design Rate}
\label{ssec:robust_rate}

\begin{definition}[Certified design rate]
\label{def:rob_rate}
Let $I_L$, $V_U$ be as in Proposition~\ref{prop:certified}.
Fix blocklength $n$ and $\varepsilon \in (0,1/2)$.
Define
\begin{equation}
  R_{\mathrm{cert}}(n,\varepsilon)
  \triangleq
  I_L
  - \sqrt{\frac{V_U}{n}}\,Q^{-1}(\varepsilon)
  - \frac{c\log n}{n},
  \label{eq:rob_rate}
\end{equation}
where $c > 0$ is the constant in the $O(\log n/n)$ third-order
term of the PPV normal approximation~\cite{Polyanskiy2010}.
\end{definition}

\begin{corollary}[Conservative achievability of $R_{\mathrm{cert}}$]
\label{cor:rob_achievable}
For every $T \in \mathcal{T}(\eta)$ and $n$ large enough,
there exists a code of rate $R_{\mathrm{cert}}(n,\varepsilon)$
achieving maximal error probability at most $\varepsilon$.
Moreover,
\begin{equation}
  R_{\mathrm{cert}}(n,\varepsilon)
  \leq I(X;Y;T)
  - \sqrt{\frac{V(X;Y;T)}{n}}\,Q^{-1}(\varepsilon)
  + O\!\left(\frac{\log n}{n}\right)
\end{equation}
for every $T \in \mathcal{T}(\eta)$.
\end{corollary}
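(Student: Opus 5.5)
The plan has two parts: first prove the comparison inequality, then use it to derive achievability. The comparison follows from Proposition~\ref{prop:certified} together with the fact that $x\mapsto\sqrt{x}$ is increasing. Since $T\in\mathcal{T}(\eta)$ and $\eta<1/2$, the proposition gives $I_L\le I(X;Y;T)$ and $V(X;Y;T)\le V_U$. Because $\varepsilon<1/2$, we have $Q^{-1}(\varepsilon)>0$, so
\(
-\sqrt{V_U/n}\,Q^{-1}(\varepsilon)\le -\sqrt{V(X;Y;T)/n}\,Q^{-1}(\varepsilon).
\)
Adding the two inequalities and carrying the term $-c\log n/n$ along yields the displayed bound with slack $-c\log n/n\le 0$. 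This is in particular $O(\log n/n)$.

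For achievability I will invoke the PPV expansion~\eqref{eq:ppv}, normalized by $n$, in its achievability (lower-bound) direction. For a fixed DMC $T$ with fixed input $\bpi$, the random-coding/dependence-testing bound of~\cite{Polyanskiy2010} gives a code with maximal error at most $\varepsilon$ and rate at least
\[
I(X;Y;T)-\sqrt{V(X;Y;T)/n}\,Q^{-1}(\varepsilon)-\frac{c_T\log n}{n}
\]
for all $n\ge n_0(T)$. Here one uses constant-composition codes of type approximating $\bpi$ to pass from average to maximal error. Combined with the comparison above, this shows that $R_{\mathrm{cert}}$ is below an achievable rate provided $c\ge c_T$. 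Taking the achievable code and discarding messages, that is, reducing $M$, then gives a code of exactly rate $R_{\mathrm{cert}}$ without increasing maximal error.

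The main obstacle is the third-order constant. Definition~\ref{def:rob_rate} treats $c$ as given, but the PPV remainder constant depends on $T$ through moments of $\imath$, such as a Berry--Esseen third absolute moment over $V^{3/2}$. It also degrades as $V\to0$, which is exactly the regime $\eta\to0$. I would handle this in two ways. First, the statement is only for each fixed $T$ and $n$ large enough, so it suffices that $c$ dominates $c_T$ or that the gap absorbs it. Second, I would note that when $I-I_L\ge \chi^2(\eta/6+\dots)>0$, the slack $I-I_L$ is a positive constant in $n$ and already dominates any $O(\log n/n)$ deficit for large $n$. Concretely, the achievable rate minus $R_{\mathrm{cert}}$ is at least
\(
(I-I_L)+(\sqrt{V_U}-\sqrt{V})Q^{-1}(\varepsilon)/\sqrt n+(c-c_T)\log n/n ,
\)
which is nonnegative for large $n$ whenever $I>I_L$. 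The degenerate case $\chi^2=0$, the fully noisy channel, is trivial because then $R_{\mathrm{cert}}<0$.
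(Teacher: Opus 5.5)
Your proposal is correct and is essentially the argument the paper leaves implicit; the corollary is stated without a written proof. Proposition~\ref{prop:certified} together with $Q^{-1}(\varepsilon)>0$ gives the displayed bound with slack $-c\log n/n\le 0$, and the achievability half of~\eqref{eq:ppv} for the fixed $T$, with $c$ read as that channel's PPV constant, gives the existence claim.

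One small fix to your fallback route: $I-I_L\ge\chi^2(\eta/6+\cdots)$ fails in general, since it is equivalent to $I\ge\chi^2/2$, which a sufficiently positive skewness $A_3$ violates. All you actually need is $I>I_L$ whenever $\chi^2>0$. This holds because $m_{1,j}=0$ forces both signs of $\delta_{ij}$ in every column with $m_{2,j}>0$, so $|m_{3,j}|<\eta\,m_{2,j}$ strictly there, and hence $|A_3|<\eta$.
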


\subsection{Certification-Gap Decomposition}
\label{ssec:loss}

\begin{proposition}[Certification-gap quantification]
\label{prop:loss}
For any $T \in \mathcal{T}(\eta)$ and $\varepsilon \in (0,1/2)$:
\begin{align}
  I(X;Y;T) - I_L
  &\leq \chi^2\!\left(
    \tfrac{\eta}{3} + \tfrac{\eta^2}{6(1-\eta)^3}
  \right)
  = O(\eta)\,\chi^2,
  \label{eq:MI_penalty_bound}\\
  \sqrt{V_U} - \sqrt{V(X;Y;T)}
  &\leq \sqrt{c_+(\eta)\chi^2} - \sqrt{c_-(\eta)\chi^2}
  + \frac{\eta^2}{2} \\
  &= O(\eta)\sqrt{\chi^2} + O(\eta^2).
  \label{eq:disp_penalty_bound}
\end{align}
Hence, the total certification gap satisfies
\begin{equation}
  R^*(n,\varepsilon;T) - R_{\mathrm{cert}}(n,\varepsilon)
  = O(\eta) + O\!\left(\frac{\eta+\eta^2}{\sqrt{n}}\right)
    + O\!\left(\frac{\log n}{n}\right),
  \label{eq:loss_rate}
\end{equation}
where the $O(\eta)$ term arises from the mutual-information
penalty $(I - I_L)$, the $O(\eta/\sqrt{n})$ term from the
dispersion mismatch $(\sqrt{V_U}-\sqrt{V})\,Q^{-1}(\varepsilon)/\sqrt{n}$,
and the $O(\log n/n)$ term from the PPV third-order remainder.
The dispersion-only correction is $O(\eta/\sqrt{n})$, but the
full rate penalty is dominated by $O(\eta)$ at fixed $n$.
\end{proposition}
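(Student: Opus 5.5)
The plan has three parts: the mutual-information penalty \eqref{eq:MI_penalty_bound}, the dispersion penalty \eqref{eq:disp_penalty_bound}, and then the assembly of both into the rate gap \eqref{eq:loss_rate} through the PPV expansion \eqref{eq:ppv}.

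\emph{Mutual-information penalty.} Theorem~\ref{thm:mi_expansion} gives
\[
I \le \chi^2\bigl(\tfrac12 + \tfrac{\eta}{6} + \tfrac{\eta^2}{12(1-\eta)^3}\bigr),
\]
using $|A_3|\le\eta$ and the bound on $|\varepsilon(T,\bpi)|$. Subtracting the definition of $I_L$ from Proposition~\ref{prop:certified} gives
\[
I - I_L \le \chi^2\bigl(\tfrac{\eta}{3} + \tfrac{\eta^2}{6(1-\eta)^3}\bigr),
\]
which is \eqref{eq:MI_penalty_bound}. Since $\eta<1/2$, the factor $(1-\eta)^{-3}$ is at most $8$, so the bracket is $O(\eta)$.

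\emph{Dispersion penalty.} Use subadditivity of the square root, $\sqrt{a+b}\le\sqrt a+\sqrt b$, to get
\[
\sqrt{V_U} \le \sqrt{c_+\chi^2} + \tfrac{\eta^2}{2}.
\]
Theorem~\ref{thm:lc} gives $V \ge c_-\chi^2$, and $c_-(\eta)>0$ for $\eta<1/2$ (because $1-\eta-\eta^2>1/4$), so $\sqrt V \ge \sqrt{c_-\chi^2}$. Subtracting yields the displayed inequality. For the order, write
\[
\sqrt{c_+}-\sqrt{c_-} = \frac{c_+-c_-}{\sqrt{c_+}+\sqrt{c_-}}.
\]
Expanding the two closed forms gives $c_+ = 1+3\eta+O(\eta^2)$ and $c_- = 1-3\eta+O(\eta^2)$, so $c_+-c_- = O(\eta)$ uniformly on $(0,1/2)$. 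The denominator is bounded below by a positive constant, so the difference is $O(\eta)\sqrt{\chi^2}$.

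\emph{Assembly.} Write
\[
R^*(n,\varepsilon;T) = I - \sqrt{V/n}\,Q^{-1}(\varepsilon) + O(\log n/n)
\]
from \eqref{eq:ppv}, and subtract Definition~\ref{def:rob_rate}. This gives
\[
R^* - R_{\mathrm{cert}} = (I-I_L) + \frac{\bigl(\sqrt{V_U}-\sqrt V\bigr)\,Q^{-1}(\varepsilon)}{\sqrt n} + O\!\left(\frac{\log n}{n}\right).
\]
Here $Q^{-1}(\varepsilon)>0$ because $\varepsilon<1/2$, so the second term is controlled by the upper bound just proved. To turn the $\chi^2$-weighted bounds into pure orders in $\eta$, bound $\chi^2$ itself:
\[
\chi^2=\sum_j s_j m_{2,j}\le\eta^2, \qquad\text{so}\qquad \sqrt{\chi^2}\le\eta.
\]
Then the first term is $O(\eta)\chi^2$, which is certainly $O(\eta)$ (in fact $O(\eta^3)$). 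The second term is $O(\eta\cdot\eta+\eta^2)/\sqrt n$, which is covered by the stated $O((\eta+\eta^2)/\sqrt n)$. The statement is an upper bound, so these looser orders suffice.

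\emph{Main obstacle.} The delicate point is the $O(\log n/n)$ term. The PPV remainder depends on $T$, so one must accept that the third-order constant is channel-dependent; it can also be absorbed together with the $c\log n/n$ term of $R_{\mathrm{cert}}$. I would state explicitly that this term is uniform only for fixed $T$, as in the paper's remark after \eqref{eq:ppv}. With that caveat, the $O(\eta)$ and $O(\eta/\sqrt n)$ bounds above hold uniformly over $\eta\in(0,1/2)$.
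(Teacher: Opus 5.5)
Your proposal is correct and follows essentially the same route as the paper. The $I$-penalty comes from subtracting the two-sided band of Theorem~\ref{thm:mi_expansion}, and the $V$-penalty from $\sqrt{a+b}\le\sqrt a+\sqrt b$ together with the lower bound of Theorem~\ref{thm:lc} and a first-order expansion of $c_\pm(\eta)$; you use the exact factorization $(c_+-c_-)/(\sqrt{c_+}+\sqrt{c_-})$ where the paper uses $\sqrt a-\sqrt b\le (a-b)/(2\sqrt b)$, and the assembly against the PPV expansion is the same. Your explicit bound $\chi^2=\sum_j s_j m_{2,j}\le\eta^2$ is a useful addition the paper leaves implicit: it justifies turning the $\chi^2$-weighted penalties into pure orders in $\eta$, and shows they are in fact $O(\eta^3)$ and $O(\eta^2/\sqrt{n})$.
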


\begin{proof}
For~\eqref{eq:MI_penalty_bound}: by Theorem~\ref{thm:mi_expansion},
$I/\chi^2 \leq 1/2 + \eta/6 + \eta^2/[12(1-\eta)^3]$, and
$I_L/\chi^2 = 1/2 - \eta/6 - \eta^2/[12(1-\eta)^3]$,
so the difference is bounded as stated.
For~\eqref{eq:disp_penalty_bound}: apply $\sqrt{a+b}\le
\sqrt a+\sqrt b$, then $\sqrt a-\sqrt b\le(a-b)/(2\sqrt b)$
and Taylor-expand $c_\pm(\eta)$ at $\eta=0$; this holds even
when $\chi^2$ is small.
Equation~\eqref{eq:loss_rate} follows by dividing the dispersion
penalty by $\sqrt{n}$ and combining with~\eqref{eq:MI_penalty_bound}.
\end{proof}

Table~\ref{tab:bounds} confirms the certified bounds on BSC, BAC, and a $4{\times}3$ example with uniform input: every exact value lies inside its interval, and intervals tighten as $\eta\to0$ and widen as $\eta$ grows, consistent with Theorems~\ref{thm:mi_expansion}--\ref{thm:lc}.

\begin{equation*}
  T=\begin{bmatrix}
    0.15 & 0.25 & 0.60 \\
    0.25 & 0.30 & 0.45 \\
    0.25 & 0.30 & 0.45 \\
    0.15 & 0.45 & 0.40
  \end{bmatrix}.
\end{equation*}


\section{Conclusion}
\label{sec:conclusion}

We have developed a column-wise chi-squared geometry for
DMCs that converts the standard objects of
finite-blocklength coding theory into certified arithmetic
intervals.
The pointwise concentration parameter~$\eta$ controls the
accuracy of the chi-squared proxy for both $I(X;Y)$ and
$V(X;Y)$ through explicit, computable constants.
The three main results are: a third-order ratio expansion
showing $I/\chi^2\to 1/2$ with an $O(\eta)$ skewness
correction; a two-sided dispersion equivalence with
multiplicative error $O(\eta)$; and a certified log-free
design rate whose total certification gap relative to the
ideal known-channel normal approximation is
$O(\eta)+O(\eta/\sqrt{n})+O(\log n/n)$.
Several directions remain open.
Closing the $O(\eta)$ achievability gap to $O(\eta^2)$
for symmetric uncertainty classes, extending the framework
to AWGN channels via truncation arguments, and estimating
$\eta$ from pilot symbols at a known statistical rate are
natural next steps; the joint channel-parameter and
capacity-achieving-input estimation framework
of~\cite{Tavakoli2026Param} is a promising starting point
for the latter.

\end{document}